\newtheorem{theorem}{Theorem}[]
\newtheorem{lemma}[theorem]{Lemma}
\begin{document}

\title{Title}

\author{Vladyslav Bohun}
\affiliation{Haiqu, Inc., 95 Third Street, San Francisco, CA 94103, USA}
\author{Andrij Kuzmak}
\affiliation{Haiqu, Inc., 95 Third Street, San Francisco, CA 94103, USA}
\affiliation{Department for Theoretical Physics, Ivan Franko National University of Lviv, Lviv, Ukraine}
\author{Maciej Koch-Janusz}
\affiliation{Haiqu, Inc., 95 Third Street, San Francisco, CA 94103, USA}
\affiliation{Department of Physics, University of Z\"urich, 8057 Z\"urich, Switzerland}

\title{Quantum Hamiltonian simulation of linearised Euler equations in complex geometries}
\begin{abstract}
Quantum computing promises exponential improvements in solving large systems of partial differential equations (PDE), which forms a bottleneck in high-resolution computational fluid dynamics (CFD) simulations, in, among others, aerospace applications and weather forecasting. One approach is via mapping classical CFD problems to a quantum Hamiltonian evolution, for which recently an explicit quantum circuit construction has been shown in simple cases, allowing proof-of-concept execution on quantum processors. Here we extended this method to more complex and practically relevant cases. We first demonstrate how boundary conditions corresponding to arbitrary complex-shaped obstacles can be introduced in the quantum representations of elementary difference operators used to implement the PDE. We provide explicit and efficient circuit constructions, and show they neither increase the Trotter error, nor asymptotic gate complexity with respect to the free space equation. Using these methods we then derive quantum circuits for simulating the linearized Euler equations in a presence of a background fluid flow and obstacles. We illustrate our results by simulating the obtained quantum circuits for a number of boundary conditions, and compare the errors of the quantum solution to classical finite difference methods.
\end{abstract}

\maketitle

\section{Introduction \label{sec1}}
Computational Fluid Dynamics (CFD) methods are key to understanding and modeling natural phenomena of fundamental importance in both science and engineering, from weather and climate forecasting, through biological flows, to aircraft and automobile design. The computational complexity of solving the associated partial differential equation (PDE) systems on required resolutions and sufficiently long time-scales poses, however, a formidable challenge to even state-of-art High-Performance Computing systems.

The progress in quantum computing hardware \cite{arute2019quantum,bluvstein2024logical,google2025quantum} and foundational algorithms \cite{childs2012hamiltonian,Cao2013,PhysRevLett.118.010501,PhysRevLett.103.150502,PRXQuantum.2.040203,gilyen2019quantum,motlagh2024generalized} has inspired a flurry of theoretical research into solving PDEs on a quantum computer \cite{zamora2024efficient,Engel_2019,berry2014,Budinski_2022,PhysRevA.93.032324,berry2017,PhysRevA.100.062315,linden2020,childsliu2020qs,Childsliuostrander2021,golse2022,jinliuyue2023} over the past decade, motivated by the exponential speedups achievable. 
The proposed quantum approaches span a large space, and differ in their domain applicability (\emph{e.g.}~steady state or evolution problems), theoretical guarantees, and suitability for execution on devices prior to achieving full fault-tolerance (\emph{e.g.} HHL \cite{Cao2013,PhysRevLett.103.150502} \emph{vs.}~variational approaches \cite{leong2023variational_pde,Lubasch_2020_variational_pde2,Ali_2023_poisson_pde,PhysRevA.104.052409_variational_poisson,Cerezo2021,TILLY20221,Sato_2023_eigensolver_pde}). We focus on time-dependent partial differential equations.

One of the most promising approaches to solving such time-dependent problems is to recast them as a quantum time-evolution described by a Schr\"odinger equation, with an auxiliary Hamiltonian defined by the PDE \cite{toyoizumi2023hamiltoniansimulationusingquantum,PhysRevX.13.041041,Jin2022,jin2023quantumsimulationdiscretelinear,toyota2024pde,satohiroukinaoki2024}. The evolution is then implemented on a quantum computer as a quantum circuits, using \emph{e.g.}~Lie-Trotter-Suzuki method. Whether the problem is easily expressible as Hamiltonian operator is heavily dependent on the PDE; in general case it may require introducing auxiliary qubits and methods such as block-encoding and linear combinations of unitaries \cite{jin2025schrodingerizationbasedquantumalgorithms,jin2023quantumsimulationquantumdynamics,PhysRevLett.133.230602,leamer2024quantumdynamicalemulationimaginary,guseynov2024explicitgateconstructionblockencoding,Novikau_2025,An_2023}. While such general constructions may require prohibitively large resources, recently a number of proposals have appeared for implementations in specific cases, in particular for the wave-equation \cite{toyota2024pde,PhysRevA.99.012323,Suau_2021,Wright_2024}, some of which care more suited to near term devices.

Here we tackle a more general problem. As most practically relevant CFD simulations require complicated boundary conditions describing \emph{e.g.}~objects such as wings, or constrictions in the flow, we first show how in the mapping to a Hamiltonian evolution arbitrary obstacles can be efficiently introduced in the quantum representations of the PDE. We build explicit quantum circuits allowing their implementation using the Trotter approach, and prove that our constructions neither increases the Trotter error of the simulation nor the asymptotic gate complexity of the simulation circuits. Second, using the above results, we develop explicit quantum circuits for the important problem of modeling sound propagation and scattering of obstacles in the presence  of a constant background fluid flow, which is described by the linearized Euler equations (LEE). We verify our constructions by simulating the obtained quantum circuits for different boundary conditions, and comparing the results to those of classical finite difference methods (FDM).

The paper is organized as follows: in Sec.~\ref{prelim} we review how finite difference operators can be represented as quantum circuits, and how consequently PDEs can be represented as a Schr\"odinger equation which may be implemented on a quantum computer. In Sec.~\ref{object_ins_env} we describe how obstacles can be efficiently implemented in the difference operators. In Sec.~\ref{euler} we derived the quantum circuit for modeling sound propagation and scattering described by the linearized Euler equations. We prove statements about the complexity of the resulting algorithm in Sec.~\ref{comp_analais}. In Sec.~\ref{experiment} we present the results of numerical simulations for a number of obstacles and compare them to classical FDM approach. We conclude and provide outlook in Sec.~\ref{conclus}. Appendices provide the details of the circuit constructions and of the mathematical proofs, as well as further numerical examples.

\begin{figure*}[t]
    \centering
    \includegraphics[width=\textwidth]{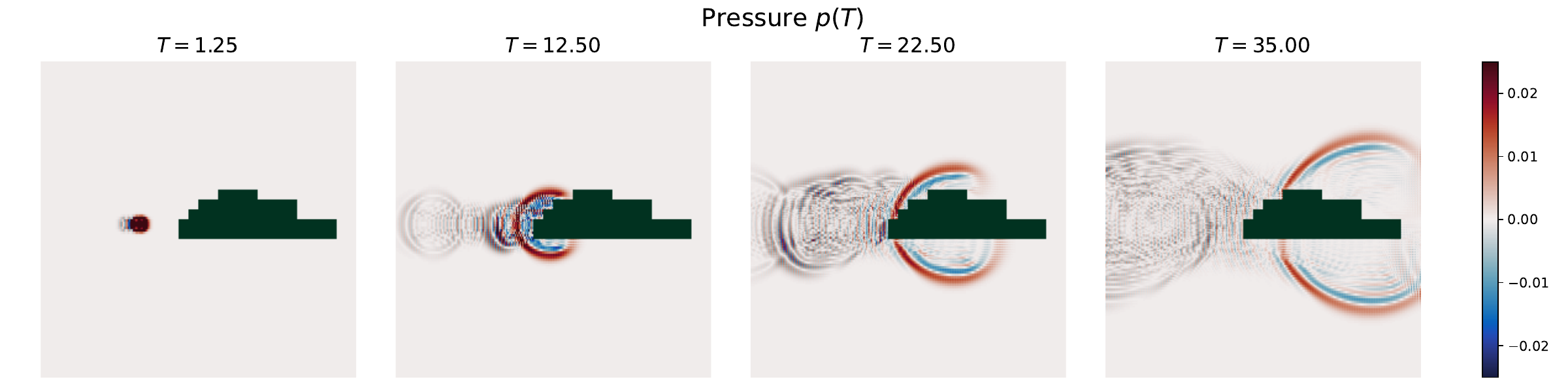}
    \caption{A quantum time-evolution simulation of the 2D linearized Euler equations (LEE) describing acoustic wave propagation in the presence of a constant background fluid flow, and an impenetrable airfoil-like obstacle. Here we plot the deviation of pressure from the equilibrium. The results were obtained by simulating the quantum circuits derived in this work, using $n=9$ qubits and a Trotter step of $\tau=0.05$.}
    \label{fig:airfoil}
\end{figure*}

\section{Differential equations as trotterized evolutions  \label{prelim}}


We begin by reviewing how some PDEs can be mapped to a quantum Hamiltonian evolution problem on a digital quantum computer using explicit quantum circuit implementation of discrete difference operators. We largely follow the exposition and conventions of Ref.~\cite{toyota2024pde}. 

We consider a system of linear PDEs 
\begin{equation}\label{eq:pde_general}
    \frac{ {\rm d} f}{ {\rm d}t} = A f,
\end{equation}
where $f=(f_1,\dots,f_s)$ is a vector function, $(x_1,\ldots,x_d)$ are the spatial coordinates, and $A$ is $s\times s$ matrix containing the physical parameters of the system and spatial derivatives $\partial/\partial {x_i}, \partial^2/\partial {x_i}\partial {x_j},\ldots$. 

When $H\coloneqq iA$ is a Hermitian operator, that is, $H^\dagger = H$,  Eq.~\ref{eq:pde_general} can be rewritten as Schr\"odinger equation
\begin{equation}\label{eq:shrodinger_eq}
\frac{ {\rm d} f}{ {\rm d}t} = -i H f,
\end{equation}
where $H$ plays a role of the Hamiltonian. Solving the equation can then, in theory, be obtained by simply time-evolving an initial condition $f(t=0)$ encoded in the amplitudes of a quantum state with the evolution operator $\exp(-iHt)$.

Performing such a procedure on a digital quantum computer is in practice more subtle. First, to encode the equation in a quantum state the PDE Eq.~\ref{eq:pde_general} needs to be discretized. This entails substituting all partial derivatives by suitably chosen difference operators, yielding a difference equation governed by operator $D$, a discretized version of $A$. It is important to emphasize that it is $D$ which has to give rise to a hermitian $H$, and whether this is possible depends on the equation itself (including parameter values), the boundary conditions, and the discretization scheme chosen. Incorporating complex boundary conditions while preserving hermiticity, in particular, is of key practical importance in simulations and is our main focus.

Second, after mapping, the re-cast Hamiltonian evolution needs to be simulated on a digital QPU to a prescribed accuracy, which requires deriving quantum circuits implementing this goal. Whatever the simulation method chosen, the resulting circuit complexity in terms of the elementary gates of the QPU architecture must be shown not to invalidate the putative quantum advantage of the whole approach. We focus on the Trotter evolution and show explicitly that circuits implementing complex conditions, such as obstacles in the fluid flow, can be implemented with only polynomial gate complexity.

\subsection{Difference operators}\label{sec:differenceops}

Let us review the mapping procedure in more detail focusing on, without loss of generality, the example of a scalar function in 1D (\emph{i.e.}~$s=d=1$ in Eq.~\ref{eq:pde_general}).

We first discretize the 1D support of $f\equiv f_1$ using $N=2^n$ equidistant points $x^k$, labeled by integers $k$, with a lattice constant $l$. We encode the values of $f$ at these points in the amplitudes of an $n$-qubit quantum state $|u\rangle$:
\begin{equation}\label{eq:vector_scalar_field}
    \vert u\rangle=\sum_{k=0}^{N-1} u_k\vert k\rangle, \mbox{\ \ \ \ with \ \ } u_k \coloneqq \frac{f(x^k)}{\sqrt{\sum_i |f(x^i)|^2}},
\end{equation}
where $\vert k\rangle \equiv |k^0k^1\ldots k^n\rangle$ are the computational basis states of the $n$-qubit system and $k^0k^1\ldots k^n$ is the binary expansion of $k$. Note that the amplitudes of $|u\rangle$ are normalized.

The first order derivative $\partial / \partial x$ can be replaced with a discrete forward, backward, or central difference operator $D^+$, $D^{-}$ and $D^{\pm}$. The action of these operators on the vector of discretized function values is given by:
\begin{align}\label{eq:D_pm_def}
    &(D^+ u)_k = \frac{u_{k+1}-u_{k}}{l},\nonumber\\
    &(D^- u)_k = \frac{u_{k}-u_{k-1}}{l},\nonumber\\
    &(D^{\pm} u)_k = \frac{u_{k+1}-u_{k-1}}{2l}.
\end{align}
To write down an explicit representation of the difference operators in the $n$-qubit system in terms of quantum gates we will first need the following shift operators:
\begin{align}\label{eq:shift_operators}
    S^-=\sum_{k=1}^{N-1}\vert k-1\rangle\langle k\vert =\sum_{j=1}^n{\rm I}^{\otimes(n-j)}\otimes\sigma_{01}\otimes\sigma_{10}^{\otimes(j-1)},\nonumber\\
     S^+=\sum_{k=1}^{N-1}\vert k\rangle\langle k-1\vert =\sum_{j=1}^n{\rm I}^{\otimes(n-j)}\otimes\sigma_{10}\otimes\sigma_{01}^{\otimes(j-1)},
\end{align}
where $\mathrm{I}$ is a 1-qubit identity and the 1-qubit operators $\sigma_{\alpha\beta}$ are given by the following $2\times 2$ matrices:
\begin{align}\label{eq:basis_matrices}
  & \sigma_{01} = \vert 0\rangle\langle 1\vert = \begin{pmatrix}
   0 & 1 \\
   0 & 0 
 \end{pmatrix},\quad
 \sigma_{10} = \vert 1\rangle \langle 0 \vert = \begin{pmatrix}
   0 & 0 \\
   1 & 0 
 \end{pmatrix},\nonumber\\
  &\sigma_{00} = \vert 0\rangle \langle 0 \vert = \begin{pmatrix}
   1 & 0 \\
   0 & 0 
 \end{pmatrix},\quad
  \sigma_{11} = \vert 1\rangle \langle 1 \vert = \begin{pmatrix}
   0 & 0 \\
   0 & 1 
 \end{pmatrix}.
\end{align}
Using the above auxiliary operators difference operators can be written down. Note that their representation depends on the boundary conditions imposed. With Dirichlet conditions at the ends of a simply connected interval they assume the following form:
\begin{align}\label{eq:D_pm_DBC}
    &D^+_D=\frac{1}{l}\left(S^--{\rm I}^{\otimes n}\right),\nonumber\\
    &D^-_D=\frac{1}{l}\left({\rm I}^{\otimes n}-S^+\right)\nonumber\\
    &D^{\pm}_D=\frac{1}{2l}\left(S^--S^+\right).
\end{align}
They are easily seen, from their explicit $N\times N$ matrix representation, to reproduce the action in Eqs.~\ref{eq:D_pm_def}:
\begin{align}\label{eq:D_pm}
    &D^{+}_{D} = \frac{1}{l} \begin{pmatrix}
   -1 & \ 1 & 0 & \cdots \\
   0 & -1 & \ 1 & \cdots \\
   0 & 0 & -1 & \cdots \\
   \vdots  & \vdots  & \vdots &  \ddots 
 \end{pmatrix},\nonumber\\
   &D^{-}_{D} = \frac{1}{l} \begin{pmatrix}
   \ 1 & 0 & 0 & \cdots \\
   -1 & \ 1 &  0 & \cdots \\
   0 & -1 & \ 1 & \cdots \\
   \vdots  & \vdots  & \vdots &  \ddots 
 \end{pmatrix},\nonumber\\
    &D^{\pm}_{D} = \frac{1}{2l} \begin{pmatrix}
   0 & \ 1 & 0 & \cdots \\
   -1 & 0 & \ 1 & \cdots \\
   0 & -1 & 0 & \cdots \\
   \vdots  & \vdots  & \vdots &  \ddots 
 \end{pmatrix}.
\end{align}
In what follows we shall focus on the Dirichlet conditions (dropping the ``D" subscript), and will derive the explicit representation of the difference operators in the more complicated and practically relevant case of impermeable obstacles within the function domain.

The generalization to a $d$-dimensional domain is simple: each dimension is discretized independently. While different amount of qubits or grid distance can be used for each direction, for simplicity we consider the case when $n$ qubits are used per spatial coordinate with the same grid $l$.
Then, the total state on $d\cdot n$ qubits reads
\begin{equation}\label{eq:vector_scalar_field_d_dim}
    \vert u\rangle=\sum_{k_1=0}^{N-1} \sum_{k_2=0}^{N-1}\ldots \sum_{k_d=0}^{N-1} u_{k_1,k_2,\ldots k_d}\vert k_1, k_2,\ldots k_d \rangle
\end{equation}
with binary expansions $k_i \equiv k_i^0k_i^1\ldots k_i^n$.
The difference operator acting in the $\alpha$ direction is defined as
\begin{equation}\label{eq:D_pm_dim}
    D_{\alpha} = {\rm I}^{\otimes(\alpha-1)n}\otimes D\otimes {\rm I}^{\otimes (d-\alpha)n},
\end{equation}
where $D$ can contain any boundary conditions or differentiation scheme. This form follows from the choice of the encoding $|u\rangle$.

Finally, if the function $f$ is a $k$-component one with $k>1$, then the individual components are encoded using at least $m$ ancillary qubits, where $m = \lceil \log_2k \rceil$. For instance, the two components of a function $f = (f_1,f_2)$ could be encoded in the states $|0\rangle  \otimes |u^{(1)}\rangle$ and $|1\rangle \otimes |u^{(2)}\rangle $, and the complete state normalized. In this work we focus on this encoding. Thus a $k$-component function of $d$ dimensions with $2^n$ discretisation points in each of them requires a total of $dn+m$ qubits.

\begin{figure}[t]
    \centering
\includegraphics[width=\columnwidth]{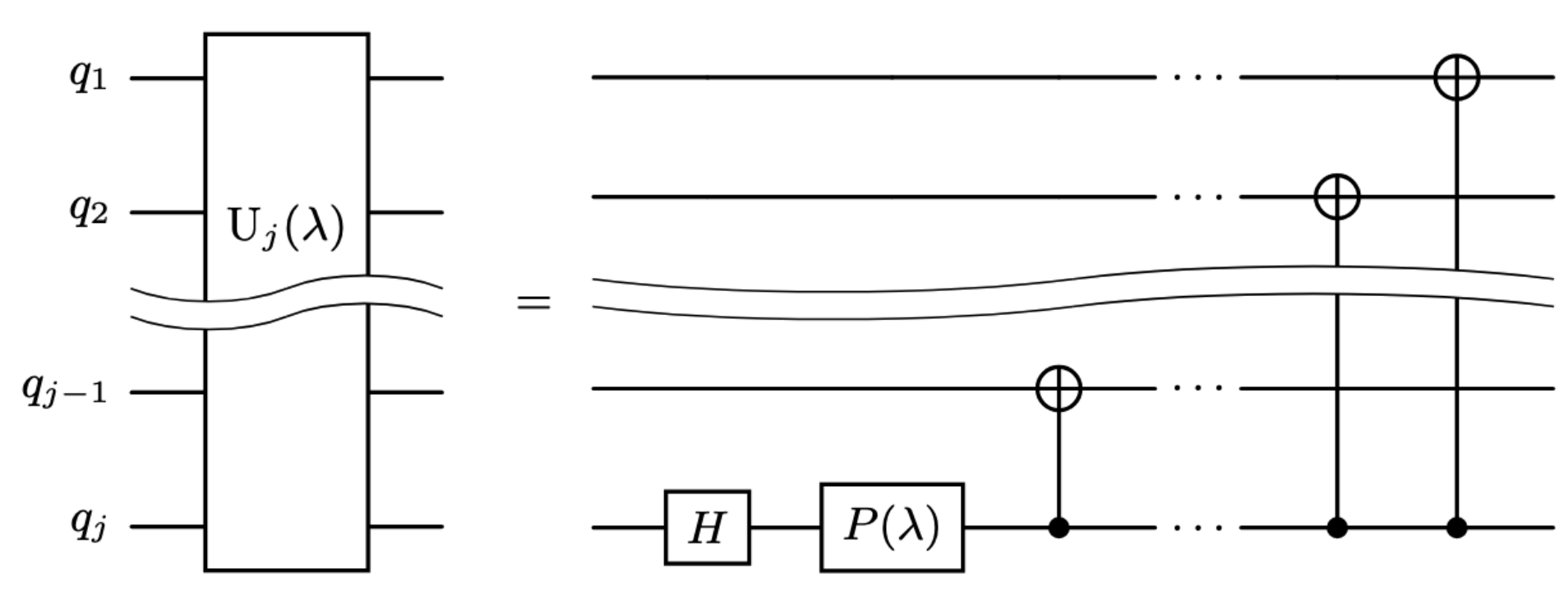}
    \caption{The quantum circuit for the ${\rm U}_j(\lambda)$ operator (see Eq.~\ref{eq:umatrix}) acting on the $q_{1},\dots,q_{j}$ register.}
    \label{fig:U_lambda_operator}
\end{figure}

\subsection{Induced evolution operators as quantum circuits}\label{sec:inducedops}

\begin{figure*}[!t]
    \centering
    \includegraphics[width=0.8\linewidth]{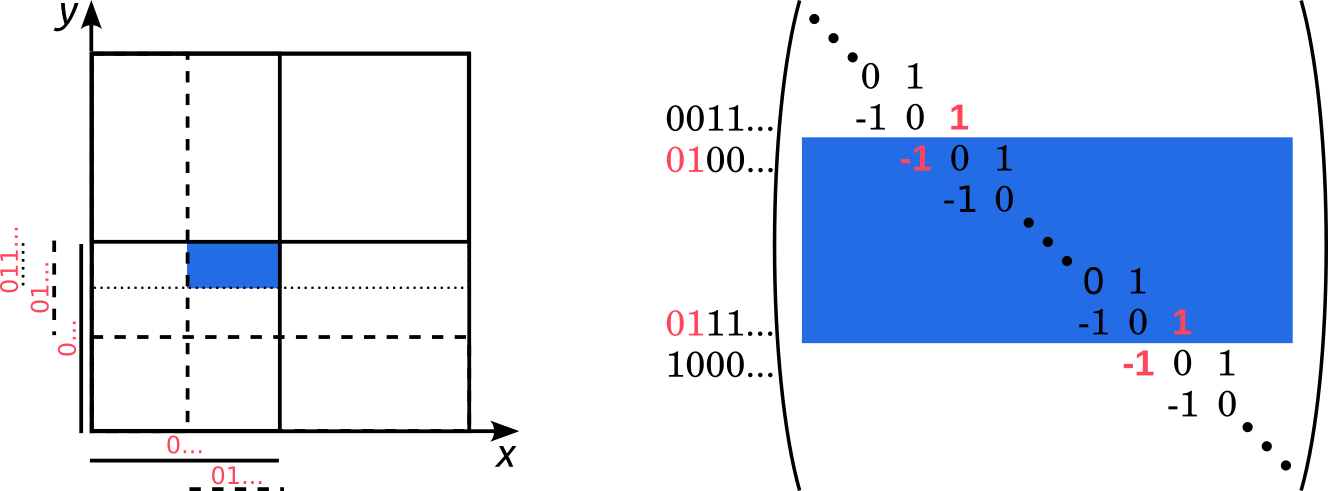}
    \caption{\textbf{(left)} A rectangular object placed inside a binary cell of the discretized domain. The cell is defined by the bit-string prefix $01$ in $x$-direction, and by $011$ in $y$-direction [see the main text]. \textbf{(right)} The matrix representation of the $2lD^\pm$ operator (here for the $x$-direction of the obstacle shown in the left panel), where rows which correspond to the points with the $x$-coordinate inside the obstacle are shaded in blue.}
    \label{fig:matrix_obstacle}
\end{figure*}

To actually implement the PDE on a quantum processor we need an explicit quantum circuit realization of the quantum evolution induced by the difference operator $\exp(-i(iD^\pm)\gamma)$.
A naive approach would be to directly express the shifts in terms of which the difference operator can be written (see Eq.~\ref{eq:shift_operators}) through Pauli strings -- this would, however, result in an exponentially growing number of terms. Instead, Ref.~\cite{toyota2024pde} derived a polynomially sized representation by diagonalizing the Hamiltonian terms in the modified Bell basis composed of states of the form $(|0\rangle|1\rangle^{\otimes j-1} \pm e^{-i\lambda} |1\rangle|0\rangle^{\otimes j-1})/\sqrt{2}$: applying the first order Trotter decomposition in can be shown that the evolution operator can be written as
\begin{align}\label{eq:bc1line}
    &e^{D^\pm \gamma} = e^{-i (iD^\pm) \gamma}\nonumber\\
    &= e^{-i \frac{i}{2l}\sum_{j=1}^{n}{\rm I}^{\otimes(n-j)}\otimes\left(\sigma_{01}\otimes\sigma_{10}^{\otimes (j-1)}-\sigma_{10}\otimes\sigma_{01}^{\otimes (j-1)}\right)} \\ \label{eq:bc1}
    &\approx \prod_{j=1}^{n} {\rm U}_j(-\pi/2) {\rm MCRZ}^{1,\dots, {j-1}}_{j}\left(\frac{\gamma}{l}\right) {\rm U}_j(-\pi/2)^\dagger.
\end{align}
Here, in the second line, before Trotterization, we used the definitions in Eqs.~\ref{eq:shift_operators} and \ref{eq:D_pm_DBC}; in the third line
${\rm MCRZ}_{j}^{1,\ldots j-1}\left(\gamma/l \right)$ is an $\exp{(-i{\rm Z}_j\gamma/(2l))}$ rotation of the $j$-th qubit (multi)-controlled on the $1,\ldots (j-1)$-th qubits, with ${\rm Z}$ the Pauli-Z matrix. 
The operator ${\rm U}_j(\lambda)$ mapping to the modified Bell basis is given by (see Fig.~\ref{fig:U_lambda_operator}):
\begin{align}\label{eq:umatrix}
  {\rm U}_j(\lambda)=\left(\prod_{m=1}^{j-1}{\rm CNOT}_m^j\right){\rm P}_j(\lambda){\rm H}_j,
\end{align}
where ${\rm CNOT}^{j}_m$ has the $j$-th qubit as the control and the $m$-th as the target, and ${\rm H}_j$, ${\rm P}_j(\lambda)$ are the Hadamard and phase shift operators on the $j$-th qubit:
\begin{equation}\label{eq:pshifted}
{\rm H}_j = \frac{1}{\sqrt{2}} \begin{pmatrix}
   1 & 1 \\
   1  & -1
 \end{pmatrix},\quad
    {\rm P}_j(\lambda) = \begin{pmatrix}
   1 & 0 \\
   0  & e^{i\lambda}
 \end{pmatrix}.
\end{equation}
If necessary, this operator can be implemented in logarithmic depth following the optimized implementations of the GHZ state \cite{Zhang_2022,Mooney_2021}.

A detailed derivation of Eq.~\ref{eq:bc1} and the upcoming CFD operators is presented in Appendices \ref{appegen}--\ref{appPeriodiccond}. Note, however, that the individual $j$ factors in the product in Eq.~\ref{eq:bc1} are generated by the corresponding $j$ terms in the sum in Eq.~\ref{eq:bc1line}, which, respectively, implement $2^{n-j}$ pairs of $\pm1$ above/below the diagonal. This intuition will prove useful later.

\section{Implementing finite difference operators with obstacles}\label{object_ins_env}

Having reviewed the implementation of the basic difference operators we now extend these results and demonstrate how complicated boundary conditions, in particular impermeable obstacles of arbitrary shapes, can be \emph{efficiently} introduced. This is of key practical importance in CFD simulations, when we are interested in, for instance, sound propagation around an object such as an airfoil or a car chassis moving in an airstream.

For pedagogical and clarity reasons we will now focus on two spatial dimensions (2D) denoted by $x$ and $y$; the extension to higher dimensions is trivial. We denote the qubits encoding the $x$ and $y$ coordinates by $q_{x,i}$ and $q_{y,i}$, respectively, where $1\leq i \leq n$. We will assume Dirichlet boundary conditions at the edges of the domain, in particular we set the value of the field to zero inside the area defining the obstacle. We will also put the derivative of the field to zero inside the obstacle.

As introduced in Sec.~\ref{sec:differenceops} we use a binary discretization, where each point's coordinates $\alpha = x,y$ are labeled by bitstrings 
$b_\alpha \equiv b_{\alpha,1}\dots b_{\alpha,n}$ with $b_{\alpha,i}\in\{0,1\}$. We will sometimes find it convenient to think of the bitstring as a binary expansion of an integer $0\leq b_\alpha \leq 2^n-1$ which labels the consecutive points on the discretized coordinate axis. 
In such a grid we define a \emph{binary cell} labeled by a pair of strings  $(b_{x,1}\dots b_{x,k_x},\, b_{y,1}\dots b_{y,k_y})$ for $1\leq k_x,k_y \leq n$ to consist of all points $(b_x,b_y)$ whose coordinates have these cell strings as prefixes. The cell prefixes specify both its position and the area, which is easily seen to be $1/2^{k_x} \times 1/2^{k_y}$. An example of a binary cell, defined by the strings $(01,\, 011)$, is shown in Fig.~\ref{fig:matrix_obstacle}.

Let us begin by first implementing an elementary rectangular obstacle described by a single binary cell. In a classical numerical update scheme implementing a $D^\pm$ difference operator would entail setting all the rows corresponding to the obstacle coordinates in its matrix representation to zero, which from an algorithmic standpoint avoids any updates of the field values at these points. The matrix representation of $D^\pm$ for the $x$ coordinate is shown in Fig.~\ref{fig:matrix_obstacle}b, with the matrix elements to be put to zero in the shaded rectangle.

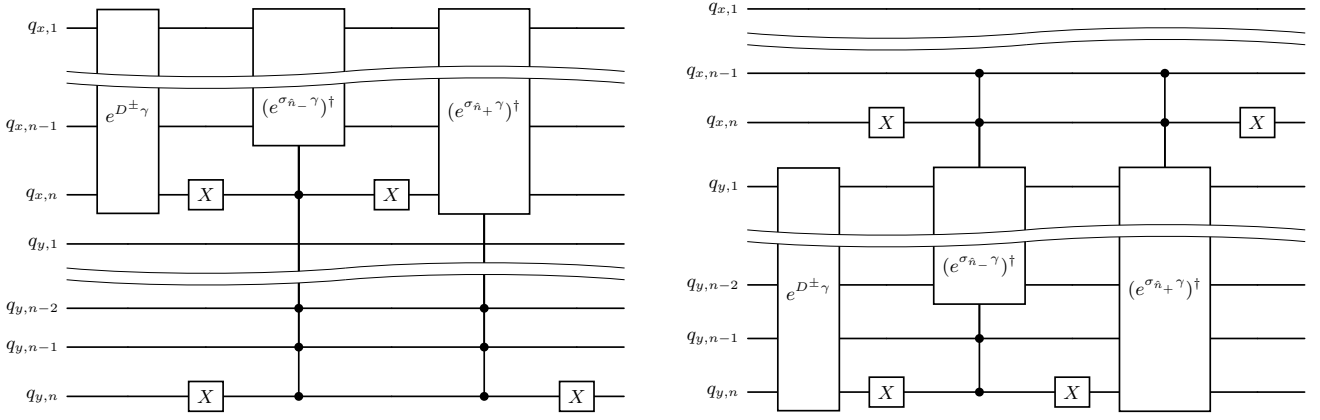
\begin{figure*}[!t]    
\begin{adjustbox}{width=\textwidth}
    \begin{quantikz}
\lstick{$q_{x,1}$}&\gate[4]{e^{D^\pm \gamma}}&&\gate[3,label
style={yshift=-0.5cm}]{(e^{\sigma_{\hat{n}_-} \gamma})^\dagger}&&\gate[4]{(e^{\sigma_{\hat{n}_+} \gamma})^\dagger}&&\\
\wave&&&&&&&&\\
\lstick{$q_{x,n-1}$}&&&&&&&\\
\lstick{$q_{x,n}$}&&\gate{X}&\ctrl{-3}&\gate{X}&&&\\
\lstick{$q_{y,1}$}&&&&&&&\\
\wave&&&&&&&&\\
\lstick{$q_{y,n-2}$}&&&\ctrl{-6}&&\ctrl{-6}&&\\
\lstick{$q_{y,n-1}$}&&&\ctrl{-7}&&\ctrl{-7}&&\\
\lstick{$q_{y,n}$}&&\gate{X}&\ctrl{-8}&&\ctrl{-8}&\gate{X}&
    \end{quantikz}
    \quad
        \begin{quantikz}
\lstick{$q_{x,1}$}&&&&&&&\\
\wave&&&&&&&&\\
\lstick{$q_{x,n-1}$}&&&\ctrl{2}&&\ctrl{2}&&\\
\lstick{$q_{x,n}$}&&\gate{X}&\ctrl{1}&&\ctrl{1}&\gate{X}&\\
\lstick{$q_{y,1}$}&\gate[5]{e^{D^\pm \gamma}}&&\gate[3,label
style={yshift=-0.5cm}]{(e^{\sigma_{\hat{n}_-} \gamma})^\dagger}&&\gate[5]{(e^{\sigma_{\hat{n}_+} \gamma})^\dagger}&&\\
\wave&&&&&&&&\\
\lstick{$q_{y,n-2}$}&&&&&&&\\
\lstick{$q_{y,n-1}$}&&&\ctrl{-3}&&&&\\
\lstick{$q_{y,n}$}&&\gate{X}&\ctrl{-4}&\gate{X}&&&
    \end{quantikz}
\caption{\textbf{(Left)} The quantum circuit implementing the difference operator (derivative) on the $x$ axis with an impenetrable object inside the 2D environment (with Dirichlet boundary conditions on its edges) defined by the bitstrings $b_{x,1}b_{x,2} = 01$ on the $x$ coordinate and by $b_{y,1}b_{y,2}b_{y,3} = 011$ on the $y$ coordinate. Here the $D^\pm$ blocks implement the obstacle-free evolution and the prefix-controlled operators (see the main text) remove the matrix elements corresponding to the obstacles. \textbf{(Right)} The corresponding circuit for the difference operator on the $y$ axis. Note that the``little-endian" convention is used, that is the left-most bits in the bit-strings correspond to the qubits with the largest index.}
\label{fig:D_operator_obstacle}
\end{adjustbox}
\end{figure*}

The quantum evolution implementation, however, is of a different nature, since all of the amplitudes are updated simultaneously. Rather than trying to directly implement an evolution ``skipping" all of the selected updates, which would result in a highly complex circuit, our approach will be different.
First, we implement the evolution under the full $D^\pm$ operator (see Eq.~\ref{eq:bc1}), and subsequently we cancel back some of the updates with an evolution under an auxiliary operator. Naively one may think that for this purpose a quantum circuit needs to be constructed, which removes all of the elements put to zero in the classical scheme, but in practice it suffices to construct an evolution which only sets to zero four entries in $D^\pm$ (marked red in Fig.~\ref{fig:matrix_obstacle}b), which greatly simplifies the circuit construction. Indeed, due to the choice of boundary conditions and field values inside the obstacle, putting the two marked matrix entries outside the shaded area to zero does not affect the numerical scheme, since the field values of interest are zero.

We will perform this construction explicitly, without loss of generality for the $x$ coordinate. Let the position of the obstacle on the $x$ axis be specified by the prefix $b_{x,1}\dots b_{x,k_x}$. It is easily seen that the matrix elements of the $D^\pm$ operator (for the $x$ coordinate) corresponding to the two entries marked in red in the top left of Fig.~\ref{fig:matrix_obstacle}b are given by:

\begin{align}\label{eq:sigma_minus}
    &|(b_{x,1}\dots b_{x,k_x}-1)111\dots \rangle\langle b_{x,1}\dots b_{x,k_x} 000\dots| \nonumber\\
    &- |b_{x,1}\dots b_{x,k_x}000\dots \rangle\langle (b_{x,1}\dots b_{x,k_x}-1)111\dots|,
\end{align}
while the bottom-right pair matrix elements are given by:
\begin{align}\label{eq:sigma_plus}
    &|b_{x,1}\dots b_{x,k_x}111\dots\rangle\langle(b_{x,1}\dots b_{x,k_x}+1)000\dots|\nonumber\\
    &- |(b_{x,1}\dots b_{x,k_x}+1)000\dots\rangle\langle b_{x,1}\dots b_{x,k_x}111\dots|,
\end{align}
where the $\pm 1$ additions in the prefixes denote bit strings obtained from increasing by one the integer, whose binary representation the original prefix was. In other words: we change the row/column index in the matrix by one, and take its binary representation, as dictated by the tridiagonal structure of the matrix $D^\pm$.

Let $p_x^-$ be the size of largest common prefix of $b_{x,1}\dots b_{x,k_x}$ and $b_{x,1}\dots b_{x,k_x}-1$, and $p_x^+$ the largest common prefix of $b_{x,1}\dots b_{x,k_x}$ and $b_{x,1}\dots b_{x,k_x}+1$ with $0\leq p_x^-,p_x^+<k_x$. Then it is easy to verify that
\begin{align}\label{eq:sigma_minus_prefix}
    &|(b_{x,1}\dots b_{x,k_x}-1)111\dots \rangle\langle b_{x,1}\dots b_{x,k_x} 000\dots| \nonumber\\
    &- |b_{x,1}\dots b_{x,k_x}000\dots \rangle\langle (b_{x,1}\dots b_{x,k_x}-1)111\dots| = \nonumber\\
    & = |b_{x,1}\dots b_{x,p_x^-}\rangle\langle b_{x,1}\dots b_{x,p_x^-}|\otimes \nonumber\\
    &\quad \left(|(0111\dots \rangle\langle 1000\dots|- |1000\dots \rangle\langle 0111\dots|\right) \equiv \nonumber \\
    &\equiv |b_{x,1}\dots b_{x,p_x^-}\rangle\langle b_{x,1}\dots b_{x,p_x^-}|\otimes 2l\cdot \sigma_{\hat{n}_-},
\end{align}
where the operator $\sigma_{\hat{n}}$ appearing in the last line with $\hat{n}:= \hat{n}_-$ (here: $\hat{n}_- = n-p_x^-$) comprises the suffix part acting on $\hat{n}$ qubits and is defined as:

\begin{align}\label{eq:hatn}
\sigma_{\hat{n}} \equiv \frac{1}{2l}(|0111\dots\rangle\langle1000\dots| - |1000\dots\rangle\langle0111\dots|).
\end{align}
Similarly it is easily seen that the operator in Eq.~\ref{eq:sigma_plus} can now be expressed as (with $\hat{n}_+ = n - p_x^+$):
\begin{equation}\label{eq:sigma_plus_prefix}
|b_{x,1}\dots b_{x,p_x^+}\rangle\langle b_{x,1}\dots b_{x,p_x^+}|\otimes 2l\cdot \sigma_{\hat{n}_+}.
\end{equation}
Eqs. \ref{eq:sigma_minus_prefix} and \ref{eq:sigma_plus_prefix} are valid for all prefixes except the edge cases at the boundary of the domain where they are all-0 or all-1. However, in this case the corresponding matrix elements are already absent in the full operator and there is no need for the correction.

With the above observations and definitions the difference operator without the matrix elements in Eqs.~\ref{eq:sigma_minus}, \ref{eq:sigma_plus} (see Fig.~\ref{fig:matrix_obstacle}) can be written as:
\begin{align}\label{eq:dpmsubs}
    D^\pm &- |b_{x,1}\dots b_{x,p_x^-}\rangle\langle b_{x,1}\dots b_{x,p_x^-}| \otimes \sigma_{\hat{n}_-}  \\ \nonumber
    &-  |b_{x,1}\dots b_{x,p_x^+}\rangle\langle b_{x,1}\dots b_{x,p_x^+}| \otimes \sigma_{\hat{n}_+}.
\end{align}

To implement the evolution generated by the the operator Eq.~\ref{eq:dpmsubs} we use the first order Trotter decomposition. First, we implement the evolution $\exp{(D^\pm \gamma)}$ according to the original difference operator (see Eq.~\ref{eq:bc1}). For the remaining ``obstacle" part, observe that the explicit matrix form  of the operator $\sigma_{\hat{n}}$ in Eq.~\ref{eq:hatn} consists of a single $\pm1$ pair above/below diagonal, \emph{i.e.}~has the same form as the $j = \hat{n}$ term in Eq.~\ref{eq:bc1line}. The evolution it generates can thus be written as:
\begin{align}\label{eq:sigma_exp}
    &e^{\sigma_{\hat{n}} \gamma} = e^{-i (i\sigma_{\hat{n}}) \gamma} = e^{-i \frac{i}{2l}\left(\sigma_{01}\otimes\sigma_{10}^{\otimes (\hat{n}-1)}-\sigma_{10}\otimes\sigma_{01}^{\otimes (\hat{n}-1)}\right)}\nonumber\\
    &= {\rm U}_{\hat{n}}(-\pi/2) {\rm MCRZ}^{ 1,\dots, \hat{n}-1}_{\hat{n}}\left(\frac{\gamma}{l}\right) {\rm U}_{\hat{n}} (-\pi/2)^\dagger.
\end{align}
Furthermore, the complete obstacle terms in Eq.~\ref{eq:dpmsubs} have the form of a tensor product of the $\sigma_{\hat{n}}$ operator and a projector on a specific prefix bit string, and thus the evolution according to them is obtained by taking a multi-controlled version of Eq.~\ref{eq:sigma_exp}, with the control selecting precisely the qubit states corresponding to the bit string required. Note, that here for clarity we have only considered the $x$-coordinate, but the obstacle operators have to also include the controls on the bit strings defining the extent of the obstacle in the other spatial directions; this is shown schematically in Fig.~\ref{fig:D_operator_obstacle}. We also remark that in this work we use the ``little-endian" convention between bits and qubits: the left-most bits of the bit-string correspond to qubits with the largest indices.

As an illustration of the above construction in a 2D example, consider again the obstacle depicted in Fig.~\ref{fig:matrix_obstacle}, defined by the string $b_{x,1}b_{x,2} = 01$ on the $x$ coordinate and by $b_{y,1}b_{y,2}b_{y,3} = 011$ on the $y$ coordinate. To construct the $x$ derivative with obstacles we first compute the common prefixes which have lengths $p_x^-=1$ and $p_x^+=0$. The resulting circuit is shown in Fig.~\ref{fig:D_operator_obstacle}: first, the full $D^\pm$ block, and then the two obstacle terms, controlled on the common prefixes in the $q_x$ register, and also on the full $011$ bit string in the $q_y$ register. The difference operator for the $y$ derivative is obtained analogously (\emph{i.e.}~controlled on the prefixes in the $y$ direction and the full 01 bitstring in the $x$ direction).

Though the above procedure provides a blueprint for implementing in the flow an obstacle of arbitrary shape (composed of multiple binary cells of different sizes) its practical implementation can be further improved to reduce both the theoretical and practical errors. For the former we note that a direct Trotter approximation of the evolution generated by the operator in Eq.~\ref{eq:dpmsubs} would violate the boundary conditions on the obstacles, due to the non-commutation of the complete obstacle and the $D^\pm$ terms. This, however, can be prevented entirely. As shown in Appendix \ref{appObstacle} the generator in Eq.~\ref{eq:dpmsubs} can be re-arranged so that the obstacle terms involving the $\sigma_{\hat{n}}$ generator are grouped with the corresponding $j=\hat{n}$ terms in $D^\pm$, with which they commute, even in their controlled version. This allows to implement the obstacle \emph{exactly}, without incurring any additional Trotter error with respect to the obstacle free evolution and thus ensures that the obstacles are indeed perfectly impermeable.

Furthermore, the above re-arrangement can be combined with additional circuit-level simplifications. Particularly the construction of multi-controlled and multi-target operators seen in Fig.~\ref{fig:D_operator_obstacle} can be made much more efficient, reducing the required circuit depth. While the gains obtained in this fashion amount to constant factors, these can have a very strong effect on the practical performance, especially on early QPUs currently (and in the near- to mid-term future) available. For details we refer to the Appendix \ref{appObstacle}.

\section{Linearized Euler equations with a mean flow \label{euler}}

Here we demonstrate the application of our methods to the quantum simulation of linearized Euler equations (LEE) in the presence of a background fluid flow, and in the presence of obstacles, for which we develop explicit circuit representations.

LEE, obtained by linearizing the complete Euler equations around a mean flow, allow  the simulation of \emph{i.a.}~acoustic wave propagation in aerospace and automotive design simulations, and are thus of key practical importance. From the point of view of quantum simulation a key challenge is that the presence of a mean flow generically hinders the mapping of LEE to a Hermitian operator described Sec.~\ref{prelim}. Here we show that for specific parameter values corresponding to the conservative regime this is still possible. 
Furthermore, the quantum circuits we derive in this Hermitian case are also the key building block for the general problem of LEE simulation with arbitrary parameter values. For more details we refer the reader to App.~\ref{appNonlinear}.

For concreteness let us now consider LEE with a background flow in the $x$ direction:
\begin{equation}\label{eq:pde_euler_full}
    \begin{cases}
      \frac{\partial p}{\partial t} = -\bar{\rho} c^2 \left( \frac{\partial u}{\partial x} + \frac{\partial v}{\partial y} \right) - \bar{u} \frac{\partial p}{\partial x},\\
      \frac{\partial u}{\partial t} = -\frac{1}{\bar{\rho}} \frac{\partial p}{\partial x} - \bar{u} \frac{\partial u}{\partial x},\\
      \frac{\partial v}{\partial t} = -\frac{1}{\bar{\rho}} \frac{\partial p}{\partial y} - \bar{u} \frac{\partial v}{\partial x},
    \end{cases}
\end{equation}
where $p$ is the deviation from the mean pressure (\emph{i.e.}~can be negative), $u$ and $v$ are the velocity components in the $x$ and $y$ directions, respectively, $\bar{\rho}$ is a static density of the environment, $\bar{u}$ is a speed of flow in $x$ direction and
$c$ is the speed of sound in the environment. 

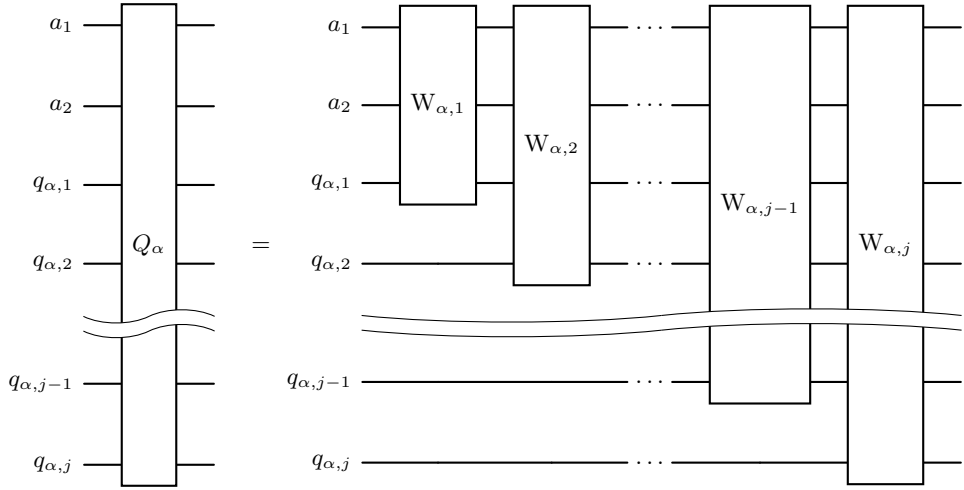
\begin{figure*}[ht!]
    \centering
    \begin{quantikz}[row sep=0.55cm]
\lstick{$a_1$}&\gate[7,label
style={yshift=0.0cm}]{Q_{\alpha}}&\ghost{X}\\
\lstick{$a_2$}&&\ghost{X}\\
\lstick{$q_{\alpha,1}$}&&\ghost{X}\\
\lstick{$q_{\alpha,2}$}&&\ghost{X}\\
\wave&&&&&&&\\
\lstick{$q_{\alpha,j-1}$}&&\ghost{X}\\
\lstick{$q_{\alpha,j}$}&&\ghost{X}
    \end{quantikz}\hspace{-2.cm}=
    \begin{quantikz}[row sep=0.5cm]
\lstick{$a_1$}&\gate[3,label style={yshift=0.0cm}]{{\rm W}_{\alpha,1}}&\gate[4,label style={yshift=0.0cm}]{{\rm W}_{\alpha,2}}& \ \ldots\ &\gate[6,label style={yshift=0.0cm}]{{\rm W}_{\alpha,j-1}}&\gate[7,label style={yshift=0.0cm}]{{\rm W}_{\alpha,j}}&\ghost{X}\\
\lstick{$a_2$}&&&\ \ldots\ &&&\ghost{X}\\
\lstick{$q_{\alpha,1}$}&&&\ \ldots\ &&&\ghost{X}\\
\lstick{$q_{\alpha,2}$}&&&\ \ldots\ &&&\ghost{X}\\
\wave&&&&&&&\\
\lstick{$q_{\alpha,j-1}$}&&&\ \ldots\ &&&\ghost{X}\\
\lstick{$q_{\alpha,j}$}&&&\ \ldots\ &&&\ghost{X}
    \end{quantikz}
    \caption{The quantum circuit implementing the $Q_{\alpha}$ time evolution   Eq.~\ref{eq:evolution_op2} for the $\alpha=x,y$ spatial direction.}
    \label{fig:Q_alpha_operator}
\end{figure*}

We represent the solution at each point as a four-component vector $f=(p,u,v,0)$, which requires  $m=2$ ancilla qubits to encode (see discussion in Sec.~\ref{sec:differenceops}); we denote them by $a_{1}$ and $a_2$. 
The zero component in the solution vector appears as we require it to be a power of $2$ in the qubit encoding.
With this LEE can be represented in the form of Eq.~\ref{eq:pde_general} with the matrix $A$ given by:
\begin{equation}\label{eq:euler_evolution}
    A =
    \begin{pmatrix}
        -\bar{u} \frac{\partial}{\partial x} & -\bar{\rho}c^2 \frac{\partial}{\partial x} & -\bar{\rho}c^2 \frac{\partial}{\partial y} & 0\\
        -\frac{1}{\bar{\rho}} \frac{\partial}{\partial x} & -\bar{u} \frac{\partial}{\partial x} & 0 & 0 \\
        -\frac{1}{\bar{\rho}} \frac{\partial}{\partial y} & 0 & -\bar{u} \frac{\partial}{\partial x} & 0 \\
        0 & 0 & 0 & -\bar{u} \frac{\partial}{\partial x}
    \end{pmatrix}.
\end{equation}
Note the appearance of the derivative on the diagonal in the auxiliary dimension introduced. While it has no effect on the dynamics (the corresponding entry in the solution is always zero), it simplifies the circuit construction.

It is now easy to see that in the special case when the speed of sound is $c = 1 / \bar{\rho}$ the matrix in Eq.~\ref{eq:euler_evolution} is symmetric.
This case corresponds to the regime, where the total energy (kinetic and pressure) is conserved exactly. When applying the $D^\pm$ difference operator prescription for discretization, the whole matrix $A$ becomes \emph{anti-symmetric}, and thus the problem can be directly rewritten as a Schr\"odinger equation, as in Eq.~\ref{eq:shrodinger_eq}, with the Hamiltonian given by:

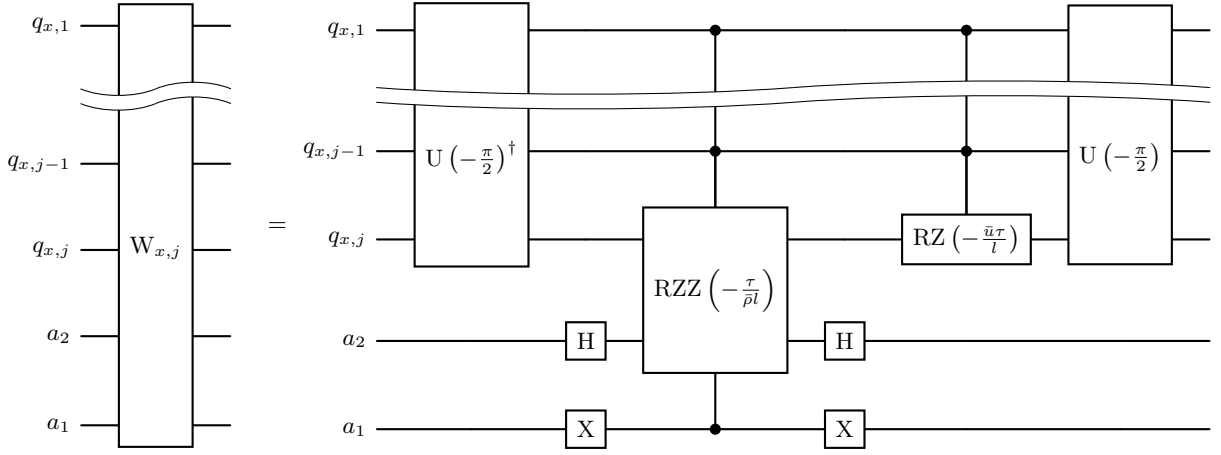
\begin{figure*}[ht!]
    \centering
    \begin{quantikz}[row sep=0.65cm]
\lstick{$q_{x,1}$}&\gate[6,label
style={yshift=-0.3cm}]{{\rm W}_{x,j}}&\ghost{X}\\
\wave&&&&&&&\\
\lstick{$q_{x,j-1}$}&&\ghost{X}\\
\lstick{$q_{x,j}$}&&\ghost{X}\\
\lstick{$a_2$}&&\ghost{X}\\
\lstick{$a_1$}&&\ghost{X}
    \end{quantikz}\hspace{-2.cm}=
    \begin{quantikz}[row sep=0.5cm]
\lstick{$q_{x,1}$}&\gate[4,label
style={yshift=-0.3cm}]{{\rm U}\left(-\frac{\pi}{2}\right)^\dagger}& & \ctrl{3}&&\ctrl{3}&\gate[4,label
style={yshift=-0.3cm}]{{\rm U}\left(-\frac{\pi}{2}\right)}&\ghost{X}\\
\wave&&&&&&&\\
\lstick{$q_{x,j-1}$}&&& \ctrl{1}&&\ctrl{1}&&\ghost{X}\\
\lstick{$q_{x,j}$}&&& \gate[2]{{\rm RZZ}\left(-\frac{\tau}{\bar{\rho}l}\right)}&&\gate{{\rm RZ}\left(-\frac{\bar{u}\tau}{l}\right)}&&\ghost{X}\\
\lstick{$a_2$}&&\gate{{\rm H}}& &\gate{{\rm H}}&&&\ghost{X}\\
\lstick{$a_1$}&&\gate{{\rm X}}& \ctrl{-2}&\gate{{\rm X}}&&&\ghost{X}
    \end{quantikz}
    \caption{The quantum circuit for the ${\rm W}_{x,j}$ operator acting on the $q_{x,1},\dots,q_{x,j}$ spatial and $a_1,a_2$ ancilla registers.}
    \label{fig:W_x_operator}
\end{figure*}

\begin{figure*}[ht!]
    \centering
    \begin{quantikz}[row sep=0.65cm]
\lstick{$q_{y,1}$}&\gate[6,label
style={yshift=-0.3cm}]{{\rm W}_{y,j}}&\ghost{X}\\
\wave&&&&&&&\\
\lstick{$q_{y,j-1}$}&&\ghost{X}\\
\lstick{$q_{y,j}$}&&\ghost{X}\\
\lstick{$a_1$}&&\ghost{X}\\
\lstick{$a_2$}&&\ghost{X}
    \end{quantikz}\hspace{-2.cm}=
    \begin{quantikz}[row sep=0.5cm]
\lstick{$q_{y,1}$}&\gate[4,label
style={yshift=-0.3cm}]{{\rm U}\left(-\frac{\pi}{2}\right)^\dagger}&&\ctrl{3}& &\gate[4,label
style={yshift=-0.3cm}]{{\rm U}\left(-\frac{\pi}{2}\right)}&\ghost{X}\\
\wave&&&&&&&\\
\lstick{$q_{y,j-1}$}&&& \ctrl{1}&&&\ghost{X}\\
\lstick{$q_{y,j}$}&&& \gate[2]{{\rm RZZ}\left(-\frac{\tau}{\bar{\rho}l}\right)}&&&\ghost{X}\\
\lstick{$a_1$}&&\gate{{\rm H}}& &\gate{{\rm H}}&&\ghost{X}\\
\lstick{$a_2$}&&\gate{{\rm X}}& \ctrl{-2}&\gate{{\rm X}}&&\ghost{X}
    \end{quantikz}
    \caption{The quantum circuit for the ${\rm W}_{y,j}$ operator acting on the $q_{y,1},\dots,q_{y,j}$ spatial and $a_1,a_2$ ancilla registers.}
    \label{fig:W_y_operator}
\end{figure*}
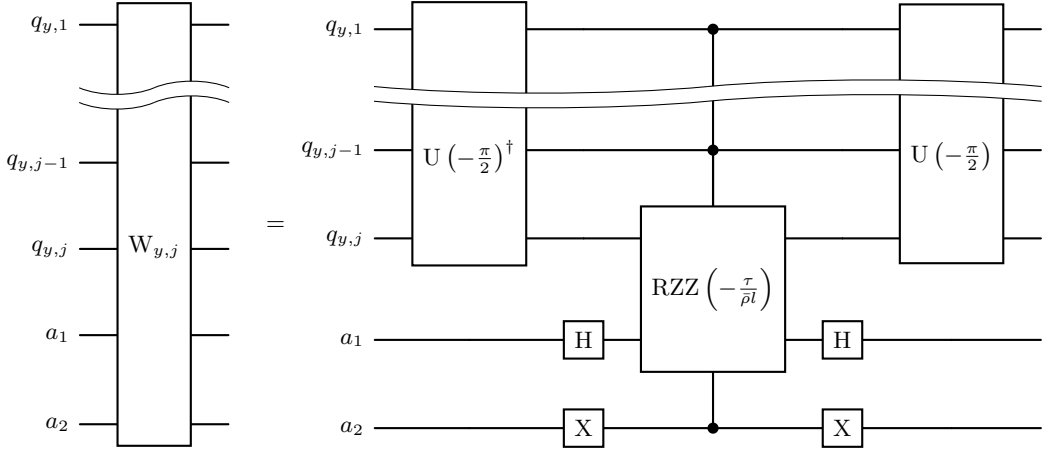

\begin{align}\label{eq:hamiltonian}
   H=&-i\left[\bar{u} {\rm I}^{a_1}\otimes {\rm I}^{a_2} \otimes D_x^\pm + \frac{1}{\bar{\rho}}\sigma_{00}^{a_1}\otimes{\rm X}^{a_2}\otimes D_x^\pm\right.\nonumber\\
   &\left.+ \frac{1}{\bar{\rho}}{\rm X}^{a_1}\otimes\sigma_{00}^{a_2}\otimes D_y^\pm\right].
\end{align}
Here and later, if necessary, we indicate the qubits an operator acts on as an upper index; $\sigma_{00}$ is defined in Eq.~\ref{eq:basis_matrices}. We further group the Hamiltonian terms related to the $x$ and $y$ derivatives, respectively, and write them out explicitly using the shift operators introduced in Eq.~\ref{eq:shift_operators}:
\begin{align}\label{eq:hamiltonian2}
   H= H_x+H_y=\sum_{j=1}^nH_{x,j}+\sum_{j=1}^nH_{y,j},
\end{align}
where
\begin{align}
&H_{x,j}=-\frac{i}{2l} \left(\bar{u} {\rm I}^{a_1}\otimes {\rm I}^{a_2} + \frac{1}{\bar{\rho}}\sigma_{00}^{a_1}\otimes{\rm X}^{a_2} \right)\nonumber\\
   &\otimes {\rm I}^{\otimes(n-j)}\otimes(\sigma_{01}\otimes\sigma_{10}^{\otimes(j-1)}-\sigma_{10}\otimes\sigma_{01}^{\otimes(j-1)})\otimes {\rm I}^{\otimes n},\nonumber\\
   &H_{y,j}=-\frac{i}{2l\bar{\rho}}{\rm X}^{a_1}\otimes\sigma_{00}^{a_2}\nonumber\\
   &\otimes {\rm I}^{\otimes n} \otimes {\rm I}^{\otimes(n-j)}\otimes(\sigma_{01}\otimes\sigma_{10}^{\otimes(j-1)}-\sigma_{10}\otimes\sigma_{01}^{\otimes(j-1)}).\nonumber
\end{align}

 The time-evolution ${\rm V}(\tau)$ over a small interval $\tau$ can thus be written using first-order Trotterization as:
\begin{align}\label{eq:evolution_op}
    V(\tau)=e^{-iH\tau}\approx e^{-iH_y\tau} e^{-iH_x\tau} \equiv Q_y(\tau)Q_x(\tau).
\end{align}
The non-commuting ``per-direction" evolution operators $Q_x$ and $Q_y$ can be further written as products:

\begin{align}\label{eq:evolution_op2}
    Q_x=\prod_{j=1}^n {\rm W}_{x,j},\quad Q_y=\prod_{j=1}^n{\rm W}_{y,j}.
\end{align}
The operators ${\rm W}_{x,j}$ and ${\rm W}_{y,j}$ act on the two ancillas and $j$ spatial discretization qubits (see Fig.~\ref{fig:Q_alpha_operator}). Their circuit representations are shown in Figs.~\ref{fig:W_x_operator}, \ref{fig:W_y_operator} and the corresponding explicit expressions are given below: 
\begin{widetext}
\begin{align}\label{eq:evolution_operator4}
&{\rm W}_{x,j}={\rm I}^{a_1}\otimes {\rm I}^{a_2}\otimes {\rm I}^{\otimes(n-j)}\otimes \left( {\rm U}_j(-\pi/2){\rm MCRZ}_{j}^{1,\ldots j-1}\left(-\bar{u}\tau/l\right){\rm U}_j(-\pi/2)^{\dagger}\right) \otimes {\rm I}^{\otimes n}\nonumber\\
&\times {\rm I}^{\otimes(n-j)}\otimes \left({\rm H}^{a_1}\otimes {\rm X}^{a_2}\otimes {\rm U}_j(-\pi/2)\right){\rm MCRZZ}_{a_2,j}^{a_1,1,\ldots j-1}\left(-\tau/(\bar{\rho}l) \right)\left({\rm U}_j(-\pi/2)^{\dagger}\otimes{\rm X}^{a_1}\otimes {\rm H}^{a_2}\right)\otimes{\rm I}^{\otimes n},\nonumber\\
&{\rm W}_{y,j}={\rm I}^{\otimes n} \otimes {\rm I}^{\otimes(n-j)}\otimes \left({\rm H}^{a_1}\otimes {\rm X}^{a_2}\otimes {\rm U}_j(-\pi/2)\right){\rm MCRZZ}_{a_1,j}^{a_2,1,\ldots j-1}\left(-\tau/(\bar{\rho}l) \right)\left({\rm U}_j(-\pi/2)^{\dagger}\otimes {\rm H}^{a_1}\otimes {\rm X}^{a_2}\right),
\end{align}
\end{widetext}
where the multi-controlled $Z$-rotation operator {\rm MCRZ} was introduced in Sec.~\ref{sec:inducedops} and where, analogously, ${\rm MCRZZ}_{a_1(a_2),j}^{a_2(a_1),1,\ldots j-1}\left(-\tau/(\bar{\rho}l) \right)$ provides an $\exp{(i{\rm Z}_{a_1(a_2)}{\rm Z}_j \tau/(2\bar{\rho}l))}$ Ising rotation of qubits $a_1$ (respectively: $a_2$) and $j$, controlled on the qubits $a_2$ (respectively: $a_1$) and $1,\ldots, j-1$. A detailed derivation of the operator ${\rm V}(\tau)$ is presented in Appendix ~\ref{appegen}.

The total evolution over the time period $T=s\tau$ is obtained in the first-order Trotter scheme as $f(T)\approx \left[{\rm V}(\tau)\right]^sf(0)$, where the quantum state encoding the initial condition should be properly normalized $\|f(0)\|=1$. The operators $Q_x$ and $Q_y$ can also be used to implement higher-order Trotter-Suzuki evolutions \cite{Suzuki1991}. Note also that the complexity of quantum circuit preparing the initial state is highly dependent on the state itself, and may in general be considerable.

This completes the derivation of the evolution circuit for the LEE Eq.~\ref{eq:pde_euler_full} with a constant background flow in the conservative regime. It can now be combined with the obstacle circuits derived in Sec.~\ref{object_ins_env} and App.~\ref{appObstacle} to allow the quantum simulation of acoustic phenomena in non-trivial settings. The only difference is that the obstacle terms are now implemented for the $D^\pm$ difference operators in the Hamiltonian Eq.~\ref{eq:hamiltonian}. We emphasize that while the conservative regime is a special case, the obtained circuits are the building blocks also in the general cases where the evolution cannot be directly mapped to a Hamiltonian evolution (see Appendix~\ref{appNonlinear}).  

Finally, we remark that the circuits depicted in Figs.~\ref{fig:W_x_operator}, \ref{fig:W_y_operator} contain multi-controlled and multi-target rotation MCRZZ operators. In Appendix~\ref{appedpdm} we show that using the $D^{\pm}$ \emph{and} the $D^{+}$, $D^{-}$ difference operators simultaneously the evolution operator construction can be simplified, in particular avoiding multi-target gates. We further derive the circuit for the periodic boundary conditions case in Appendix~\ref{appPeriodiccond}.

\section{Algorithm complexity \label{comp_analais}}

We briefly describe the resources required to implement the methods we discussed. We use $n$ qubits to represent the discretized grid of $2^n$ points in each direction (we focus on two spatial dimensions here). The most important practical metric in analysing the circuit complexity is the time increment $\tau=T/s$ ($s$ is the number of Trotter steps), the number of two-qubit (typically CNOT) gates; the number of one-qubit gates is of a similar order.

\begin{figure*}[t!]
    \centering
    \includegraphics[width=0.85\textwidth]{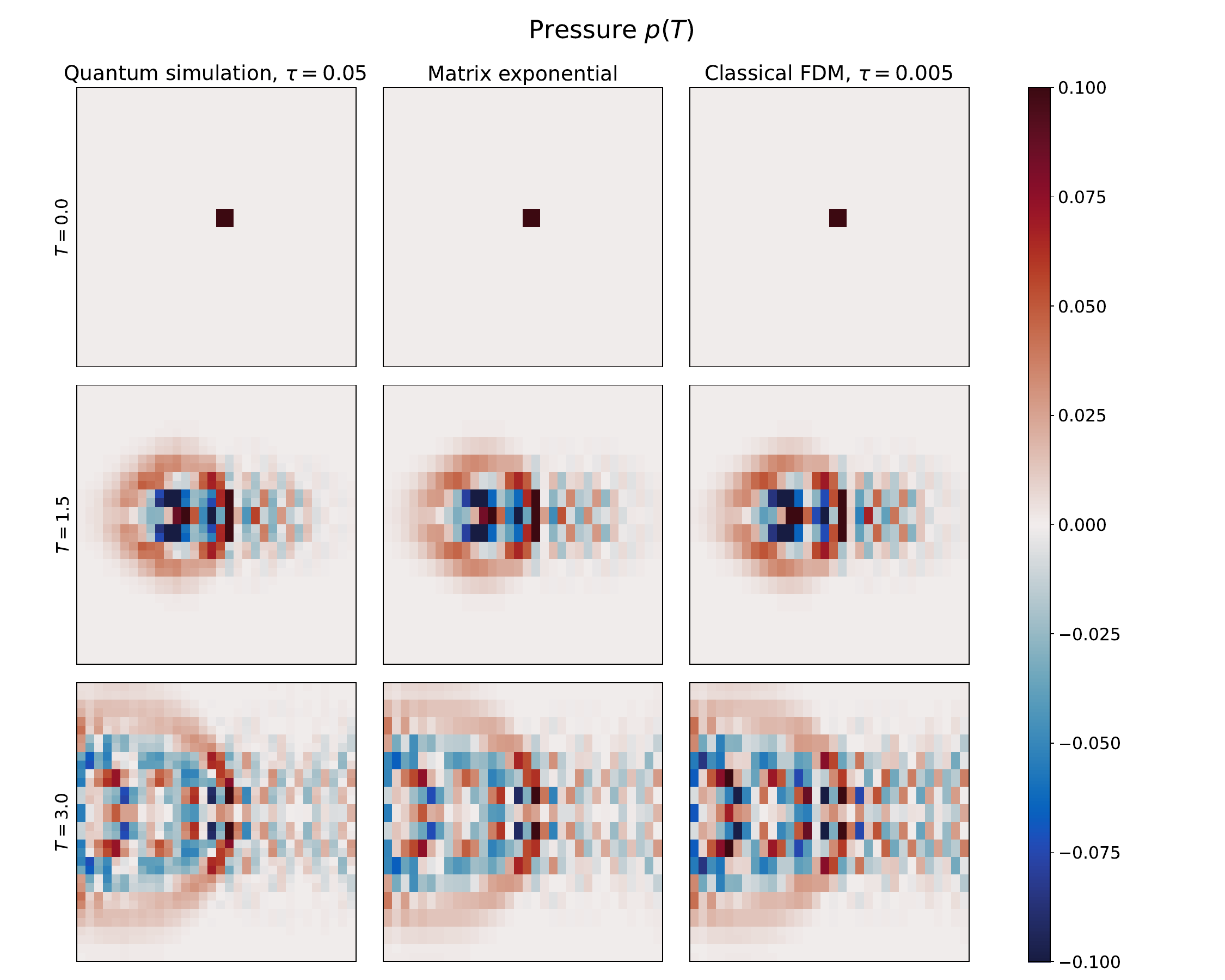}
    \caption{Numerical comparison of the pressure component of the solution at the different time $T$ between quantum simulation (this work), exact matrix exponential solution and the classical finite difference method (FDM) for $n=5$.}
    \label{fig:euler_ideal_comparison}
\end{figure*}

\begin{lemma}\label{lemma1} The evolution operator $\exp{(-iH\tau)}$ with the Hamiltonian $H$ given in Eq.~\ref{eq:hamiltonian} and a time increment $\tau$ can be approximated in the first-order Lie-Trotter-Suzuki decomposition by the unitary operator ${\rm V}(\tau)$ Eq.~\ref{eq:evolution_op} with an approximation error (in the sense of operator spectral norm) upper bounded bounded by:

 \begin{align}\label{eq:upper_error}
     \lVert \exp{(-iH\tau)}& - {\rm V}(\tau)\rVert \nonumber\\
     \leq& \left[\left(\frac{\bar{u}}{2}\right)^2+2\left(\frac{1}{2\bar{\rho}}\right)^2 +
     \frac{\bar{u}}{2\bar{\rho}} \right]\frac{\tau^2(n-1)}{2l^2}\nonumber\\
     &+\left(\frac{1}{2\bar{\rho}}\right)^2\frac{\tau^2 n^2}{2l^2}.
 \end{align}
\end{lemma}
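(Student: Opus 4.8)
The plan is to bound the first-order Trotter error using the standard additive commutator bound: for a Hamiltonian split into a sum of terms $H=\sum_\gamma H_\gamma$, the error of the first-order product formula over a step $\tau$ satisfies $\lVert e^{-iH\tau}-\prod_\gamma e^{-iH_\gamma\tau}\rVert \le \tfrac{\tau^2}{2}\sum_{\gamma<\gamma'}\lVert[H_\gamma,H_{\gamma'}]\rVert$. Here the natural splitting is the one already written in Eq.~\ref{eq:hamiltonian2}: the two ``per-direction'' blocks $H_x=\sum_j H_{x,j}$ and $H_y=\sum_j H_{y,j}$, with ${\rm V}(\tau)=Q_y(\tau)Q_x(\tau)$ as in Eq.~\ref{eq:evolution_op}. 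So the first step is to collect the relevant commutator pairs. Since each ${\rm W}_{x,j}$ (resp.\ ${\rm W}_{y,j}$) exactly implements $e^{-iH_{x,j}\tau}$ (resp.\ $e^{-iH_{y,j}\tau}$), the only error sources are (i) cross-direction pairs $[H_{x,j},H_{y,j'}]$ and (ii) same-direction pairs $[H_{x,j},H_{x,j'}]$ and $[H_{y,j},H_{y,j'}]$ with $j\neq j'$.

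Next I would evaluate these commutator norms term by term. The key structural facts to exploit: (a) $H_{x,j}$ and $H_{y,j'}$ act with the shift-type suffix operators $M_j:=\sigma_{01}\otimes\sigma_{10}^{\otimes(j-1)}-\sigma_{10}\otimes\sigma_{01}^{\otimes(j-1)}$ on \emph{disjoint} spatial registers ($q_x$ vs.\ $q_y$), so a cross commutator is nonzero only through the ancilla factors; writing the ancilla prefactors as $P_x=\bar u\,{\rm I}\otimes{\rm I}+\tfrac1{\bar\rho}\sigma_{00}\otimes{\rm X}$ and $P_y=\tfrac1{\bar\rho}{\rm X}\otimes\sigma_{00}$, one gets $[H_{x,j},H_{y,j'}]=-\tfrac1{4l^2}[P_x,P_y]\otimes M_j^{(x)}\otimes M_{j'}^{(y)}$. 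One computes $[P_x,P_y]$ explicitly (only the $\sigma_{00}{\rm X}$-vs-${\rm X}\sigma_{00}$ piece and the $\bar u{\rm I}$-vs-${\rm X}\sigma_{00}$ piece contribute), and uses $\lVert M_j\rVert=1$ (it is a direct sum of $2\times2$ blocks $\begin{pmatrix}0&1\\-1&0\end{pmatrix}$ padded with zeros, so its nonzero singular values are all $1$). Summing over the $n^2$ pairs $(j,j')$ and the $\tfrac{\tau^2}2$ prefactor should produce the $(1/2\bar\rho)^2\,\tau^2 n^2/(2l^2)$ term — this isolates the ``pure cross-direction'' contribution. (b) For same-direction pairs, $[H_{x,j},H_{x,j'}]$ factors as (ancilla prefactor squared-type term) $\otimes\,[M_j,M_{j'}]^{(x)}\otimes{\rm I}$; here one needs $\lVert[M_j,M_{j'}]\rVert$ and a count of how many such pairs are nonzero. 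This is where I expect the main work: $M_j$ and $M_{j'}$ for $j<j'$ share qubits $1,\dots,j$, and one must check whether they commute — by the ``grouping'' remark after Eq.~\ref{eq:bc1} and in App.~\ref{appObstacle} the relevant structure is that consecutive terms largely commute, leaving $O(n)$ genuinely non-commuting pairs, each with $\lVert[M_j,M_{j'}]\rVert$ bounded by a constant (at most $2$, from $\lVert[A,B]\rVert\le2\lVert A\rVert\lVert B\rVert$, likely exactly giving the factor producing $(n-1)$). The ancilla prefactor of the $x$-block squared contributes $(\bar u/2)^2$, the cross term between the $\bar u{\rm I}$ and $\tfrac1{\bar\rho}\sigma_{00}{\rm X}$ pieces gives the $\bar u/(2\bar\rho)$ term, and the $(\tfrac1{\bar\rho}\sigma_{00}{\rm X})$-squared piece together with the analogous $y$-block contribution gives $2(1/2\bar\rho)^2$ — all multiplied by $\tau^2(n-1)/(2l^2)$.

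Assembling: the cross-direction pairs yield the last line of Eq.~\ref{eq:upper_error}, and the same-direction pairs (both $x$ and $y$) yield the bracketed expression times $\tau^2(n-1)/(2l^2)$; adding them and applying the triangle inequality to the product-formula telescoping gives the claimed bound. The main obstacle is the bookkeeping in step (b): correctly identifying which pairs $(H_{x,j},H_{x,j'})$ actually fail to commute (so that the count is $n-1$ rather than $\binom n2$), and pinning down the constant in $\lVert[M_j,M_{j'}]\rVert$ so that the coefficients come out exactly as stated rather than merely up to a constant; I would handle this by working in the modified Bell basis where the $M_j$ are diagonal away from the control qubits, reducing each commutator to a small explicit matrix. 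A secondary subtlety is making sure the error from approximating $Q_x$ and $Q_y$ \emph{internally} by the product $\prod_j {\rm W}_{\alpha,j}$ is zero (exact), which follows because within a single direction the relevant grouping used to build the ${\rm W}_{\alpha,j}$ makes the factorization exact, so that the \emph{only} Trotter error is the outer $e^{-iH_y\tau}e^{-iH_x\tau}$ split plus, if one does not pre-group, the intra-direction commutators accounted for above.
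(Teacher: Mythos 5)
Your plan follows essentially the same route as the paper's proof in Appendix~\ref{appCommutators}: apply the first-order commutator bound (the paper invokes Proposition~9 of Ref.~\cite{Childs2021}) to the term-by-term decomposition of $H$, evaluate the cross-direction commutators through the ancilla factors and the same-direction ones through the suffix operators $M_j$, and sum; your choice of keeping the ancilla prefactor $P_x=\bar u\,{\rm I}\otimes{\rm I}+\tfrac{1}{\bar\rho}\sigma_{00}\otimes{\rm X}$ together rather than splitting it into diagonal and off-diagonal pieces is immaterial, since $\lVert P_x\rVert^2=(\bar u+1/\bar\rho)^2$ reproduces exactly the three bracketed coefficients. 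Two caveats. First, the entire $(n-1)$ structure rests on the fact that for $j<j'$ the padded commutators $[M_j,M_{j'}]$ vanish unless $j=1$, and in that case have spectral norm exactly $1$; you correctly flag this as ``the main work'' but leave it unproven (the paper imports it from Refs.~\cite{toyota2024pde,Hu_2024}). Your fallback bound $\lVert[M_j,M_{j'}]\rVert\le 2$ would inflate the result by a factor of $2$, and the count must be exactly $n-1$ rather than $O(n)$, so this step cannot be waved through. Second, your closing remark that the internal factorization $Q_x=\prod_j{\rm W}_{x,j}$ is \emph{exact} is false: precisely the $j=1$ versus $j'>1$ pairs fail to commute, and these are what generate the $(n-1)$ terms in the bound; fortunately you include those intra-direction commutators in the error budget anyway, so the final bound is unaffected. (A minor slip: the $\bar u\,{\rm I}$ piece of $P_x$ contributes nothing to $[P_x,P_y]$ since it is proportional to the identity on the ancillas; only $[\sigma_{00}\otimes{\rm X},{\rm X}\otimes\sigma_{00}]=\sigma_{01}\otimes\sigma_{10}-\sigma_{10}\otimes\sigma_{01}$, of unit norm, survives — which is indeed consistent with the coefficient you state.)
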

For the proof of Lemma~\ref{lemma1} we refer to the Appendix~\ref{appCommutators}.
We also show that the gate complexity of the unitary operator ${\rm V}(\tau)$ is quadratic in the number of qubits, as formalized by the following lemma:
\begin{lemma}\label{lemma2}
The evolution operator ${\rm V}(\tau)$ Eq.~\ref{eq:evolution_op} depicted in Figs.~\ref{fig:Q_alpha_operator}, \ref{fig:W_x_operator},  \ref{fig:W_y_operator} can be implemented with $\mathcal{O}(n^2)$ ${\rm CNOT}$ gates in all-to-all qubit connectivity.
\end{lemma}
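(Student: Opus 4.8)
The plan is to count two-qubit gates by following the nested structure of $V(\tau)$: $V(\tau)=Q_y(\tau)Q_x(\tau)$ (Eq.~\ref{eq:evolution_op}), $Q_\alpha=\prod_{j=1}^n {\rm W}_{\alpha,j}$ (Eq.~\ref{eq:evolution_op2}), and to show that a single block ${\rm W}_{\alpha,j}$ costs only $\mathcal{O}(j)$ two-qubit gates. Granting that, summing over the block index gives $\sum_{j=1}^n \mathcal{O}(j)=\mathcal{O}(n^2)$ for each of $Q_x$ and $Q_y$, and $V(\tau)$ is the product of two such operators, so $2\cdot\mathcal{O}(n^2)=\mathcal{O}(n^2)$. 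Thus essentially all of the work is the per-block bound.

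To obtain that bound I would read off the contents of ${\rm W}_{\alpha,j}$ from Eq.~\ref{eq:evolution_operator4} and Figs.~\ref{fig:W_x_operator}--\ref{fig:W_y_operator}: (i) an $\mathcal{O}(1)$ number of single-qubit gates (${\rm H}$, ${\rm X}$, ${\rm P}$) acting on the ancillas $a_1,a_2$ and on $q_{\alpha,j}$; (ii) an $\mathcal{O}(1)$ number of copies of ${\rm U}_j(\pm\pi/2)$, each of which by Eq.~\ref{eq:umatrix} is ${\rm H}_j$, ${\rm P}_j$ and a fan-out of $j-1$ ${\rm CNOT}$s from $q_{\alpha,j}$, hence $\mathcal{O}(j)$ two-qubit gates; and (iii) the multi-controlled rotations ${\rm MCRZ}^{1,\dots,j-1}_{j}$ and ${\rm MCRZZ}^{a_1,1,\dots,j-1}_{a_2,j}$, carrying $j-1$ and $j$ controls respectively. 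The ${\rm MCRZZ}$ can be reduced to an ${\rm MCRZ}$ with the same controls plus two ${\rm CNOT}$s between its two targets, since ${\rm RZZ}(\theta)={\rm CNOT}\cdot{\rm RZ}(\theta)\cdot{\rm CNOT}$ and those ${\rm CNOT}$s commute through the control qubits; so it is enough to bound a $k$-controlled single-axis rotation with $k\le j$.

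The crux is item (iii): a naive ancilla-free $k$-controlled rotation costs $\mathcal{O}(k^2)$, which would only yield $\mathcal{O}(n^3)$ overall. The point I would exploit is that although ${\rm W}_{\alpha,j}$ acts nontrivially on just $j+2$ qubits, the remaining qubits — the entire $q_{\bar\alpha}$ register together with $q_{\alpha,j+1},\dots,q_{\alpha,n}$, at least $n$ of them — are idle during this block and may serve as borrowed (``dirty'') ancillas. With such ancillas available, a $k$-controlled ${\rm NOT}$, and hence (after computing/uncomputing the conjunction of the controls onto an ancilla) a $k$-controlled rotation, compiles to $\mathcal{O}(k)$ two-qubit gates in all-to-all connectivity via the standard multi-controlled-gate constructions; the same GHZ-fan-out optimizations already invoked for ${\rm U}_j$ \cite{Zhang_2022,Mooney_2021} apply here. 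This is the step I expect to require the most care, since one must check that enough idle qubits are present for every $j\le n$ (they are: $\ge n$ suffices) and that the all-to-all assumption — which enters the statement precisely through these fan-out and multi-control gadgets — removes any routing overhead. With this, each ${\rm W}_{\alpha,j}$ uses $\mathcal{O}(j)$ two-qubit gates.

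Assembling, $Q_\alpha$ uses $\sum_{j=1}^n\mathcal{O}(j)=\mathcal{O}(n^2)$ two-qubit gates and $V(\tau)=Q_y(\tau)Q_x(\tau)$ therefore $\mathcal{O}(n^2)$, which is the claim. I would close by remarking that the single-qubit gate count is of the same order, as every decomposition step above emits only $\mathcal{O}(1)$ single-qubit gates per two-qubit gate, and that on a hardware graph of bounded degree one incurs the usual polynomial routing overhead on top of this count.
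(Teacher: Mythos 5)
Your proposal is correct and reaches the right bound, but it resolves the crucial step --- the cost of the $k$-controlled rotations --- by a genuinely different mechanism than the paper. The paper's proof has the same skeleton (count the ${\rm U}_j$ fan-outs, which give $2n(n-1)$ CNOTs; reduce each MCRZZ to two MCRZ gates with one extra control; then bound the $n$ MCRZ and $2n$ MCRZZ gates), but at the key step it invokes known \emph{ancilla-free} decompositions of multi-controlled special-unitary (in particular ${\rm RZ}$) gates that are linear in the number of CNOTs \cite{vale2023decompositionmulticontrolledspecialunitary,Iten_2016}; your premise that an ancilla-free $k$-controlled rotation necessarily costs $\mathcal{O}(k^2)$ is true for generic controlled unitaries but not for ${\rm SU}(2)$ targets, which is precisely the loophole the paper exploits. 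You instead borrow the $\geq n$ idle qubits of the other spatial register as dirty ancillas and use Barenco-style constructions; this also yields $\mathcal{O}(k)$ two-qubit gates and you correctly check that enough idle qubits exist for every $j$, so the argument goes through. One small imprecision: with \emph{dirty} ancillas you cannot literally ``compute the conjunction of the controls onto an ancilla''; the correct route is the standard sandwich ${\rm RZ}(\theta/2)\,C^k(X)\,{\rm RZ}(-\theta/2)\,C^k(X)$ (or the analogous $V$--$C^kX$--$V^\dagger$--$C^kX$ identity), which reduces the controlled rotation to two multi-controlled NOTs that \emph{do} admit linear-cost dirty-ancilla implementations. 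The trade-off between the two routes: yours is more elementary and self-contained but consumes the other register as workspace and requires the idle-qubit check; the paper's is cleaner, keeps each ${\rm W}_{\alpha,j}$ local to its own register, and yields an explicit constant ($42n^2-34n+34$ CNOTs for $n\geq 3$).
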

\begin{proof}
    The operator ${\rm V}(\tau)$ consists of multiple ${\rm U}_j(\lambda)$ operators and multi-controlled rotations. We show that the implementation of both groups does not exceed quadratic complexity.
    First, note that by Eq.~\ref{eq:umatrix} a single ${\rm U}_j(\lambda)$ contains $j-1$ CNOTs, and ${\rm V}(\tau)$ contains a linear number of operators ${\rm W}$, and thus of ${\rm U}_j(\lambda)$, as per Figs.~\ref{fig:Q_alpha_operator}, \ref{fig:W_x_operator},  \ref{fig:W_y_operator}. In total $2n(n-1)$ ${\rm CNOT}$ gates are enough to implement all ${\rm U}_j(\lambda)$ and ${\rm U}_j(\lambda)^\dagger$ operators.

    Since each ${\rm W_{x,j}}$ contains a MCRZZ and a MCRZ gate (see Fig.~\ref{fig:W_x_operator}) and ${\rm W_{y,j}}$ an MCRZZ gate (Fig.~\ref{fig:W_y_operator}) the evolution ${\rm V}(\tau)$ contains $n$ ${\rm MCRZ}$ gates and $2n$ ${\rm MCRZZ}$ gates in total.
    The latter can be decomposed into two ${\rm MCRZ}$ gates with an additional control (see Fig.~\ref{fig:mcrzz_decompozition}). While decompositions of general multi-controlled gates may require quadratic number of CNOTs \cite{silva2023lineardecompositionapproximatemulticontrolled,Barenco_1995,da_Silva_2022}, in special cases, particualrly with ${\rm RZ}$ as the controlled unitary, efficient decompositions linear in the number of ${\rm CNOT}$ gates are known \cite{vale2023decompositionmulticontrolledspecialunitary,Iten_2016}. This provides an upper boundary $\mathcal{O}(n^2)$ for the complexity of the Trotter step, completing the proof.
\end{proof}

Similarly, it can be easily shown that the total amount of single-qubit gates is $\mathcal{O}(n^2)$ per Trotter step. The exact number is dependent on the decomposition used and device topology and gate set. We remark that in sparse topologies the ${\rm SWAP}$ gates can increase the gate complexity by a factor of $n$. For the all-to-all topology, using the decompositions in Ref.~\cite{vale2023decompositionmulticontrolledspecialunitary}, at most $42n^2-34n+34$ ${\rm CNOT}$ gates are needed to implement $V(\tau)$ for $n\geq 3$.

The total simulation complexity over time interval $T = \tau s$ can now estimated:

\begin{lemma}\label{lemma3}
The CNOT cost of simulating the evolution operator $\exp{(-iHT)}$ with the 2D Hamiltonian Eq.~\ref{eq:hamiltonian} with $2n+2$ qubits on a grid with size $l$ to an additive error $\varepsilon>0$ is: 
\begin{equation}\label{eq:totalcomplexity}
    \mathcal{O}\left( \frac{T^2 n^4}{l^2 \varepsilon} \right)
\end{equation}
\end{lemma}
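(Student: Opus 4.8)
The plan is to obtain Lemma~\ref{lemma3} as a direct corollary of the two preceding lemmas by a standard telescoping argument. First I would fix the number of Trotter steps $s$ and set $\tau = T/s$, so that the approximate evolution over $[0,T]$ is the operator $\mathrm{V}(\tau)^s$. Writing $U \coloneqq \exp(-iH\tau)$ and $W \coloneqq \mathrm{V}(\tau)$ and using the identity $U^s - W^s = \sum_{k=0}^{s-1} U^{k}(U-W)W^{s-1-k}$ together with the fact that both $U$ and $W$ are unitary (hence of unit spectral norm), one immediately gets the subadditive estimate
\begin{equation}
\lVert \exp(-iHT) - \mathrm{V}(\tau)^s \rVert \le s\,\lVert \exp(-iH\tau) - \mathrm{V}(\tau)\rVert .
\end{equation}

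Next I would substitute the per-step bound of Lemma~\ref{lemma1} (Eq.~\ref{eq:upper_error}). Every term there is proportional to $\tau^2$, and, since $\bar u$ and $\bar\rho$ are fixed constants, the dominant dependence on the qubit number is the $(1/2\bar\rho)^2\,\tau^2 n^2/(2l^2)$ contribution, the remaining terms being $\mathcal{O}(\tau^2 n/l^2)$. Hence the right-hand side above is $\mathcal{O}(s\,\tau^2 n^2/l^2) = \mathcal{O}(T\tau n^2/l^2)$, using $s\tau = T$. Demanding that this be at most the target additive error $\varepsilon$ forces $\tau = \mathcal{O}(l^2\varepsilon/(T n^2))$, equivalently a step count $s = T/\tau = \mathcal{O}(T^2 n^2/(l^2\varepsilon))$.

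Finally, by Lemma~\ref{lemma2} each factor $\mathrm{V}(\tau)$ is implemented with $\mathcal{O}(n^2)$ CNOT gates in all-to-all connectivity, so the total CNOT cost is $s\cdot\mathcal{O}(n^2) = \mathcal{O}(T^2 n^4/(l^2\varepsilon))$, which is exactly Eq.~\ref{eq:totalcomplexity}. The only point requiring a little care is the first step — that the global error accumulates at most linearly in $s$ — but this is precisely what the telescoping sum plus unitarity delivers; everything afterwards is bookkeeping (isolating the leading $n^2$ term in the Lemma~\ref{lemma1} bound and solving for $s$). I therefore expect no genuine obstacle: the statement is essentially a composition of Lemmas~\ref{lemma1} and~\ref{lemma2}.
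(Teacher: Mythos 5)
Your proof is correct and takes essentially the same route as the paper's: use Lemma~\ref{lemma1} to conclude that $s = \mathcal{O}(T^2 n^2/(l^2\varepsilon))$ Trotter steps suffice for additive error $\varepsilon$, then multiply by the $\mathcal{O}(n^2)$ per-step CNOT cost from Lemma~\ref{lemma2}. The only difference is that you spell out the telescoping-plus-unitarity argument for linear error accumulation, which the paper leaves implicit.
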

\begin{proof}
    Using the estimate from Lemma~\ref{lemma1} we have that to achieve the additive error $\varepsilon$ over the time interval $T=\tau s$ we need $s$ to be of order:
    \begin{equation}
        s = \mathcal{O}\left( \frac{T^2 n^2}{l^2 \varepsilon} \right).
    \end{equation}
    At each step we implement an operator ${\rm V}(\tau)$ with a quadratic CNOT complexity, as given by Lemma~\ref{lemma2}. Combining the two results we obtain Eq.~\ref{eq:totalcomplexity}.
\end{proof}

We have thus obtained that the total simulation cost is polynomial in the number of qubits, \emph{i.e.}~it offers an exponential advantage over classical schemes explicitly updating a $2^n\times 2^n$ spatial grid, at least when $l = 1/poly(n)$. This holds for an obstacle implemented to a finite resolution, \emph{i.e.} with a constant amount of binary cells (see App.~\ref{appObstacle}).

We also draw attention of the reader to the interesting fact that, contrary to classical schemes, the quantum solution is always stable, it only contains the Trotter error. This is due to the fact that it directly realizes the matrix exponential form of the solution, which would be the case for a classical integration scheme only in the limit $\tau\to0$ (see also Sec.~\ref{experiment}).

\section{Numerical simulation and experiments \label{experiment}}

Here we show the results of executing the obtained quantum circuits on an ideal state-vector simulator, and compare them to the classical finite difference method (FDM) using the forward Euler scheme. We found that the FDM scheme required an order of magnitude smaller time step than used in the quantum simulation $\tau$ to maintain numerical stability. For this reason we also run a direct (classical) matrix exponential solution $e^{At}f(0)$, equivalent to the other two in the limit $\tau \rightarrow 0$. In all of the plots we show the pressure component $p$ of the solution (note that in the LEE $p$ is actually the deviation from the mean pressure, and can assume both positive and negative values).

If not stated otherwise, for the initial conditions we consider a single high pressure point-source of noise. We put $u(0)=v(0)=0$ everywhere and set $p(0)=0$ in all points except for a small $2 \times 2$ square in the center, where the pressure is set to $p=0.5$. Such initial conditions are easily prepared with several $\rm{X}$ and $\rm{H}$ gates. For more complex initial conditions shallow circuits may be prepared using Tensor Network approaches \cite{distr_loading}.

In Fig.~\ref{fig:euler_ideal_comparison} we compare the three methods discussed. We set $n=5$, $c=\bar{\rho}=1$, $\bar{u}=-1$ and $l=0.25$. The quantum and classical solutions closely agree. Note though that in the quantum case we used a time step $\tau=0.05$, while in the classical numerical scheme we required a finer step of $\tau=0.005$ to maintain numerical stability (see Fig.~\ref{fig:l2_error_unstable}). This is because, unlike the classical one, the quantum solution only contains the Trotter error, which, furthermore, in some cases can even be independent of system size and simulation time \cite{heyl2019quantum}.
The solution error (in the pressure distribution) measured as the $L_2$ distance between either the quantum or classical FDM scheme and the exact matrix exponential is shown in Fig.~\ref{fig:l2_error}, and demonstrates the advantage of the quantum approach.

\begin{figure}
    \centering
    \includegraphics[width=\textwidth]{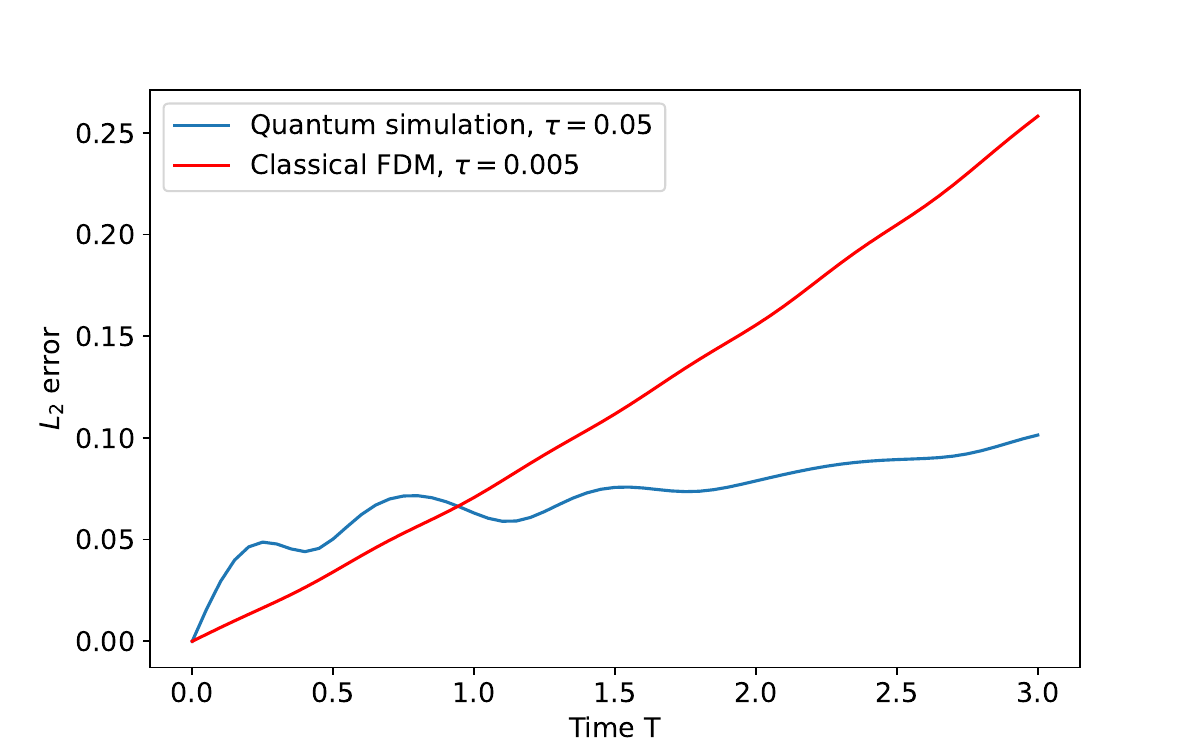}
    \caption{The $L_2$-error of the quantum and classical FDM solutions for the pressure $p(T)$ in Fig.\ref{fig:euler_ideal_comparison} with respect to the exact matrix exponential solution. Note that the classical FDM required an order of magnitude finer time step to maintain stability (see also Fig.\ref{fig:l2_error_unstable}).}
    \label{fig:l2_error}
\end{figure}

Further, in Fig.~\ref{fig:airfoil} we show the results of executing the quantum circuit derived using our method for the 2D pressure wave propagating on a constant background constant flow and scattering from an impenetrable object inside the environment. The object in this case is a low-scale representation of an airfoil. To simplify the circuit construction we assume that the airfoil does not disturb the uniform background flow, but forms a reflective boundary for the acoustic wave. Nonuniform background flows can be implemented using methods described in Ref.\cite{satohiroukinaoki2024}. We set $c=\bar{\rho}=1$, $\bar{u}=2$, $\tau=0.05$, $l=0.5$. To avoid reflection from the boundaries of the domain (where we also have Dirichlet boundary conditions) we perform the simulation on a large $512\times512$ grid ($n=9$) and observe the wave propagating before it reaches the edges of the environment.

While above we considered square domains and single-point sources, this is not a limitation. In Fig.~\ref{fig:euler_double_source_pipe} we show the results of executing quantum circuits generated for the propagation of a 2D acoustic wave inside an elongated pipe with reflective boundaries. The wave itself is generated by two sources with different intensities, and propagates on top of a uniform constant fluid flow. We used $c=\bar{\rho}=1$, $\bar{u}=0.5$, $l=0.5$, $\tau=0.05$, and employed 8 qubits for $x$-direction and 6 qubits for $y$-direction discretization. 

Finally, to illustrate the flexibility of our methods, in Appendix~\ref{appExperiments} we show quantum simulations of the 2D acoustic wave propagation with circuits derived for arbitrarily complex shapes (see Fig.~\ref{fig:cat_ufo}).

\begin{figure}[t!]
\begin{minipage}{1.00\textwidth}
\includegraphics[width=\textwidth]{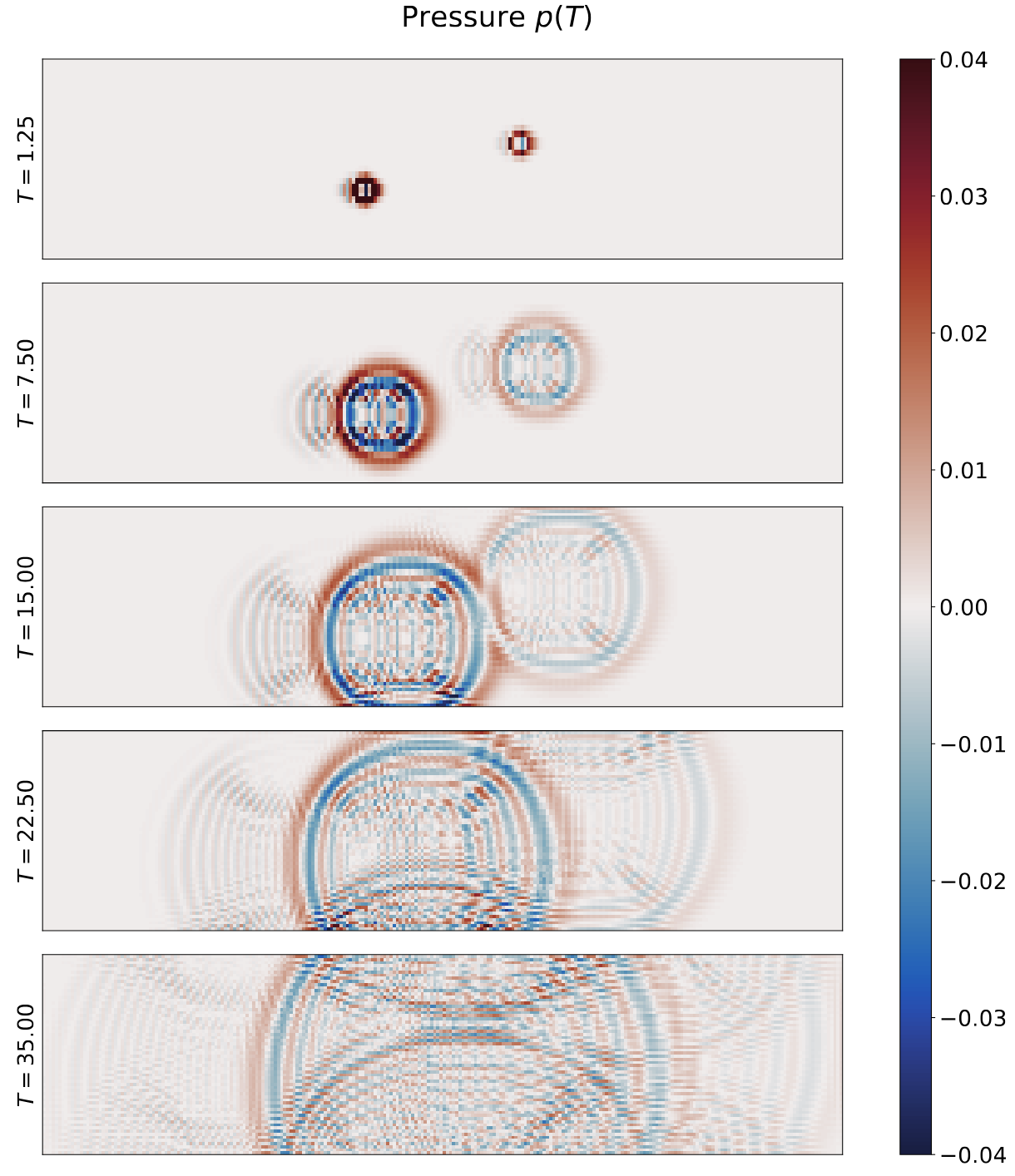}
\end{minipage}
\begin{minipage}{0.24\textwidth}
    \caption{Quantum simulation of a 2D acoustic wave generated by two point sources of differing intensity propagating on top of a constant fluid flow in a pipe. We used $256\times64$ grid and reflective boundaries.}
    \label{fig:euler_double_source_pipe}
\end{minipage}
\end{figure}

\section{Conclusions \label{conclus}}

We investigated extending the recently introduced methods of solving PDEs via mapping them to a Hamiltonian evolution problem to realistic settings. Towards this goal we devised a method to generate explicit quantum circuit incorporating in the discretized derivative operators boundary conditions for arbitrarily complex shapes. This allows modeling phenomena such as acoustic wave scattering from airfoils. We showed that our construction does not change the asymptotic complexity of the circuit, as long as the obstacle is implemented to a finite resolution, preserving the exponential advantage of the free-space quantum method over classical solutions.

We further applied this technique to the problem of simulating the linearized Euler equations (LEE) Eq.~\ref{eq:pde_euler_full}. We demonstrated that in the conservative regime LEE with background flow ($c=1/\bar{\rho}$) can be implemented as a Hamiltonian evolution, and constructed explicit quantum circuits for it. We further incorporated the boundary conditions corresponding to  obstacles placed in the flow and simulated the quantum solution to acoustic wave scattering from them. While the general LEE with background flow are non-conservative, the method we proposed can be used to implement the key building blocks of the more general solutions, based \emph{e.g.}~on the linear combination of unitaries (LCU) or the ``Schro\"odingerization" methods (\cite{An_2023,jin2025schrodingerizationbasedquantumalgorithms,guseynov2024explicitgateconstructionblockencoding}). All throughout we took care to describe optimized gate constructions, reducing both the theoretical as well practical implementation errors. 

We expect that the methods we describe as well as the detailed circuit constructions will prove very useful to researchers investigating solving PDEs on the existing and upcoming quantum computers. Our intention was to reduce some of the complexities of implementing computations of theoretical or applied relevance on the QPUs. Many more such problems remain outstanding, and require solution before the asymptotic advantages of quantum PDE solving methods can materialize in relevant cases. A key example is the readout problem of avoiding the full state tomography of the computed PDE solution. Despite some interesting ideas \cite{song2024incompressiblenavierstokessolvenoisy,termanova2024tensorquantumprogramming,williams2024addressingreadoutproblemquantum,fang2025polyqromorthogonalpolynomialbasedquantumreducedorder} it remains one of the major challenges, and we would like to draw researchers attention to this class of problems.

\bibliographystyle{unsrt}
\bibliography{haiqucfd}{}	

\medskip

\appendix

\section{Circuit derivation for the conservative system with central finite difference operators \label{appegen}}
\setcounter{equation}{0}
\renewcommand{\theequation}{A\arabic{equation}}

Here we explicitly derive the quantum circuit for the evolution described by the operator in Eq.~\ref{eq:euler_evolution} with $c = 1 / \bar{\rho}$. 
Firstly,
we rewrite some of the factors appearing in the definition of the $H_{x,j}$ and $H_{y,j}$ terms in the Hamiltonian Eq.~\ref{eq:hamiltonian2} using the operators ${\rm U}_j(\lambda)$ given in Eq.~\ref{eq:umatrix}:
\begin{widetext}
\begin{align}\label{eq:sum_operator}
  &i{\rm I}^{\otimes(n-j)}\otimes(\sigma_{01}\otimes\sigma_{10}^{\otimes(j-1)}-\sigma_{10}\otimes\sigma_{01}^{\otimes(j-1)})=i{\rm I}^{\otimes(n-j)}\otimes\left(\vert 0\rangle\langle 1\vert \otimes \vert 1\rangle\langle 0\vert^{\otimes(j-1)}-\vert 1\rangle\langle 0\vert \otimes \vert 0\rangle\langle 1\vert^{\otimes(j-1)}\right)\nonumber\\
  &=i{\rm I}^{\otimes(n-j)}\otimes\left(\vert 0\rangle\otimes \vert 1\rangle^{\otimes (j-1)}\langle 1\vert \otimes \langle 0 \vert^{\otimes(j-1)}-\vert 1\rangle\otimes \vert 0\rangle^{\otimes (j-1)}\langle 0\vert \otimes \langle 1 \vert^{\otimes(j-1)}\right)\nonumber\\
  &={\rm I}^{\otimes(n-j)}\otimes\left(\frac{\vert 0\rangle\otimes\vert 1\rangle^{\otimes (j-1)}-i\vert 1\rangle\otimes\vert 0\rangle^{\otimes (j-1)} }{\sqrt{2}}\frac{\langle 0\vert\otimes\langle 1\vert^{\otimes (j-1)}+i\langle 1\vert\otimes\langle 0\vert^{\otimes (j-1)} }{\sqrt{2}}\right.\nonumber\\
  &\left.- \frac{\vert 0\rangle\otimes\vert 1\rangle^{\otimes (j-1)}+i\vert 1\rangle\otimes\vert 0\rangle^{\otimes (j-1)} }{\sqrt{2}}\frac{\langle 0\vert\otimes\langle 1\vert^{\otimes (j-1)}-i\langle 1\vert\otimes\langle 0\vert^{\otimes (j-1)} }{\sqrt{2}}\right)\nonumber\\
  &={\rm I}^{\otimes(n-j)}\otimes {\rm U}_j(-\pi/2)\left({\rm Z}\otimes \sigma_{11}^{\otimes (j-1)}\right){\rm U}_j(-\pi/2)^{\dagger}.
\end{align}
\end{widetext}
The effect of unitary matrix ${\rm U}_j(\lambda)$ (see Fig.~\ref{fig:U_lambda_operator} for the circuit representation) is to perform a basis change to the modified Bell basis so that:
\begin{align}
&{\rm U}_j(\lambda)\vert 0\rangle\otimes \vert 1\rangle^{\otimes (j-1)}\nonumber\\
&=\frac{1}{\sqrt{2}}\left(\vert 0\rangle\otimes\vert 1\rangle^{\otimes (j-1)}+e^{i\lambda}\vert 1\rangle\otimes\vert 0\rangle^{\otimes (j-1)} \right)\nonumber
\end{align}
and
\begin{align}
&{\rm U}_j(\lambda)\vert 1\rangle\otimes \vert 1\rangle^{\otimes (j-1)}\nonumber\\
&=\frac{1}{\sqrt{2}}\left(\vert 0\rangle\otimes\vert 1\rangle^{\otimes (j-1)}-e^{i\lambda}\vert 1\rangle\otimes\vert 0\rangle^{\otimes (j-1)} \right).\nonumber
\end{align}
Using Eq.~\ref{eq:sum_operator} we immediately obtain Eqs.~\ref{eq:bc1} and \ref{eq:sigma_exp}.

Continuing, the LEE Hamiltonian Eq.~\ref{eq:euler_evolution} can now be rewritten in the following form:
\begin{align}\label{eq:hamiltonian3}
H=H_x^{diag}+H_x^{off}+H_y^{off},
\end{align}
where
\begin{widetext}
\begin{align}
   &H_x^{diag}= -\frac{\bar{u}}{2l} {\rm I}^{a_1}\otimes {\rm I}^{a_2}\otimes\sum_{j=1}^n {\rm I}^{\otimes(n-j)}\otimes {\rm U}_j(-\pi/2)\left({\rm Z}\otimes \sigma_{11}^{\otimes (j-1)}\right){\rm U}_j(-\pi/2)^{\dagger}\otimes {\rm I}^{\otimes n},\label{eq:hamiltonian4}\\
   &H_x^{off}= -\frac{1}{2\bar{\rho}l}\sigma_{00}^{a_1}\otimes{\rm X}^{a_2} \otimes\sum_{j=1}^n {\rm I}^{\otimes(n-j)}\otimes {\rm U}_j(-\pi/2)\left({\rm Z}\otimes \sigma_{11}^{\otimes (j-1)}\right){\rm U}_j(-\pi/2)^{\dagger}\otimes {\rm I}^{\otimes n},\label{eq:hamiltonian5}\\
   &H_y^{off}=-\frac{1}{2\bar{\rho}l}{\rm X}^{a_1}\otimes\sigma_{00}^{a_2}\otimes {\rm I}^{\otimes n}\otimes \sum_{j=1}^n {\rm I}^{\otimes(n-j)}\otimes {\rm U}_j(-\pi/2)\left({\rm Z}\otimes \sigma_{11}^{\otimes (j-1)}\right){\rm U}_j(-\pi/2)^{\dagger}.\label{eq:hamiltonian6}
\end{align}
\end{widetext}
Above we split the Hamiltonian into a diagonal part $H_x^{diag}$ and two off-diagonal ones, corresponding to the derivatives $x$ and $y$, respectively.
Using the Lie-Trotter-Suzuki decomposition we obtain the first-order approximation the time evolution operator:
\begin{widetext}
\begin{align}\label{eq:evolution_operator}
&\exp{\left(-iH\tau\right)}\approx\exp{\left(-iH_x^{diag}\tau\right)}\exp{\left(-iH_x^{off}\tau\right)}\exp{\left(-iH_y^{off}\tau\right)}\nonumber\\
&\approx\prod_{j=1}^{n} \exp{\left(i \frac{\bar{u}\tau}{2l} {\rm I}^{a_1}\otimes {\rm I}^{a_2}\otimes {\rm I}^{\otimes(n-j)}\otimes {\rm U}_j(-\pi/2)\left({\rm Z}\otimes \sigma_{11}^{\otimes (j-1)}\right){\rm U}_j(-\pi/2)^{\dagger}\otimes {\rm I}^{\otimes n}\right)}\nonumber\\
&\times \prod_{j=1}^{n} \exp{\left(i \frac{\tau}{2\bar{\rho}l}\sigma_{00}^{a_1}\otimes{\rm X}^{a_2}\otimes {\rm I}^{\otimes(n-j)}\otimes {\rm U}_j(-\pi/2)\left({\rm Z}\otimes \sigma_{11}^{\otimes (j-1)}\right){\rm U}_j(-\pi/2)^{\dagger}\otimes {\rm I}^{\otimes n}\right)}\nonumber\\
&\times \prod_{j=1}^{n} \exp{\left(i \frac{\tau}{2\bar{\rho}l}{\rm X}^{a_1}\otimes\sigma_{00}^{a_2}\otimes {\rm I}^{\otimes n} \otimes {\rm I}^{\otimes(n-j)}\otimes {\rm U}_j(-\pi/2)\left({\rm Z}\otimes \sigma_{11}^{\otimes (j-1)}\right){\rm U}_j(-\pi/2)^{\dagger}\right)}.
\end{align}
\end{widetext}
Using the identities
\begin{align}
&\exp{\left(-i {\rm U} A {\rm U}^\dagger\right)}={\rm U}\exp{\left(-iA\right)}{\rm U^\dagger},\nonumber\\
&{\rm X}={\rm H}{\rm Z}{\rm H},\quad
\sigma_{00}={\rm X}\sigma_{11} {\rm X},
\end{align}
for a unitary matrix $U$, the above expression is rewritten as
\begin{widetext}
\begin{align}\label{eq:evolution_operator3}
&\exp{\left(-iH\tau\right)}\approx \prod_{j=1}^{n} {\rm I}^{a_1}\otimes {\rm I}^{a_2}\otimes {\rm I}^{\otimes(n-j)}\otimes {\rm U}_j(-\pi/2)\exp{\left(i \frac{\bar{u}\tau}{2l} {\rm Z}\otimes \sigma_{11}^{\otimes (j-1)}\right)}{\rm U}_j(-\pi/2)^{\dagger}\otimes {\rm I}^{\otimes n}\nonumber\\
&\times \prod_{j=1}^{n} {\rm I}^{\otimes(n-j)}\otimes \left({\rm X}^{a_1}\otimes {\rm H}^{a_2}\otimes {\rm U}_j(-\pi/2)\right)\exp{\left(i \frac{\tau}{2\bar{\rho}l}\sigma_{11}^{a_1}\otimes{\rm Z}^{a_2}\otimes {\rm Z}\otimes \sigma_{11}^{\otimes (j-1)}\right)}\left({\rm U}_j(-\pi/2)^{\dagger}\otimes{\rm X}^{a_1}\otimes {\rm H}^{a_2}\right)\otimes{\rm I}^{\otimes n}\nonumber\\
    &\times \prod_{j=1}^{n} {\rm I}^{\otimes n} \otimes {\rm I}^{\otimes(n-j)}\otimes \left({\rm H}^{a_1}\otimes {\rm X}^{a_2}\otimes {\rm U}_j(-\pi/2)\right)\exp{\left(i \frac{\tau}{2\bar{\rho}l}{\rm Z}^{a_1}\otimes\sigma_{11}^{a_2}\otimes {\rm Z}\otimes \sigma_{11}^{\otimes (j-1)}\right)}\left({\rm U}_j(-\pi/2)^{\dagger}\otimes {\rm H}^{a_1}\otimes {\rm X}^{a_2}\right).
\end{align}
\end{widetext}
We now observe that terms in the exponents are generators for multi-controlled ${\rm RZ}(\gamma)$ and ${\rm RZZ}(\gamma)$ rotation gates. Note that ${\rm RZ}(\gamma) = \exp(-i \gamma {\rm Z} / 2)$, therefore division by 2 in the central difference scheme is absorbed into the gate. With the above, finally, the circuit can be expressed in the form given in Eqs.~\ref{eq:evolution_op}, \ref{eq:evolution_op2}, \ref{eq:evolution_operator4}. In Fig.~\ref{fig:W_x_operator} we also grouped up some terms to reduce amount of $U_j(\lambda)$ gates.

\section{Circuit derivation for the conservative system with $D^{+}$ and $D^{-}$ difference operators \label{appedpdm}}
\setcounter{equation}{0}
\renewcommand{\theequation}{B\arabic{equation}}

In this appendix we derive the quantum circuit for the Euler equations with the background flow (in the conservative regime $c=1/\bar{\rho}$) with first order derivative operators. We use the $D^{\pm}$ difference operator to quantize derivatives of the diagonal part and $D^{+}$, $D^{-}$ difference operators to quantize derivatives of the off-diagonal part of the matrix Eq.~\ref{eq:euler_evolution}. The $D^{+}$ and $D^{-}$ are used for the super-diagonal and sub-diagonal terms, respectively. Then the evolution matrix remains Hermitian and can be expressed as:
\begin{align}\label{eq:hamiltonian_dp_dm}
   &\tilde{H}=-i\left[\bar{u} {\rm I}^{a_1}\otimes {\rm I}^{a_2} \otimes D^{\pm}\otimes {\rm I}^{\otimes n}\right.\\
   &\left.+ \frac{1}{\bar{\rho}}\left(\sigma_{00}^{a_1}\otimes\sigma_{01}^{a_2}\otimes D^+\otimes {\rm I}^{\otimes n}+\sigma_{00}^{a_1}\otimes\sigma_{10}^{a_2}\otimes D^-\otimes {\rm I}^{\otimes n}\right)\right.\nonumber\\
    &\left.+ \frac{1}{\bar{\rho}}\left(\sigma_{01}^{a_1}\otimes\sigma_{00}^{a_2}\otimes {\rm I}^{\otimes n}\otimes D^++\sigma_{10}^{a_1}\otimes\sigma_{00}^{a_2}\otimes {\rm I}^{\otimes n}\otimes D^-\right)\right].\nonumber
\end{align}
Note that the first line gives the diagonal elements with the derivatives w.r.t.~$x$ only, while the second and third lines contain off-diagonal elements with derivatives w.r.t.~to $x$ and $y$, respectively. Similarly to the previous section, the Hamiltonian Eq.~\ref{eq:hamiltonian_dp_dm} is also rewritten using the ${\rm U}_j(\lambda)$ operators defined in Eq.~\ref{eq:umatrix}. The first term with the $D^{\pm}$ operator was already derived in Eq.~\ref{eq:hamiltonian4}. Let us now rewrite the $x$ derivative off-diagonal terms. Substituting the representation of the $D^{+}$ and $D^-$ operators with Dirichlet boundary conditions Eq.~\ref{eq:D_pm_DBC} and the definitions of the shift operators Eq.~\ref{eq:shift_operators} into these terms we obtain:
\begin{widetext}
\begin{align}\label{eq:_hamiltonian_x_term}
   \tilde{H}^{off}_x=&-\frac{i}{\bar{\rho}}\left(\sigma_{00}^{a_1}\otimes\sigma_{01}^{a_2}\otimes D^+\otimes {\rm I}^{\otimes n}+\sigma_{00}^{a_1}\otimes\sigma_{10}^{a_2}\otimes D^-\otimes {\rm I}^{\otimes n}\right)\nonumber\\
   =&-\frac{i}{\bar{\rho}l}\left(\sigma_{00}^{a_1}\otimes\left(\sigma_{01}^{a_2}\otimes S^--\sigma_{10}^{a_2}\otimes S^{+} \right) -\sigma_{00}^{a_1}\otimes\left(\sigma_{01}^{a_2}-\sigma_{10}^{a_2}\right) \otimes {\rm I}^{\otimes n} \right)\otimes {\rm I}^{\otimes n}\nonumber\\
   =&-\frac{1}{\bar{\rho}l} \sum_{j=1}^n{\rm I}^{\otimes(n-j)}\otimes\left({\rm X}^{a_1}\otimes {\rm X}^j\otimes{\rm U}_{a_2}\left(-\frac{\pi}{2}\right)\right)\left(\sigma_{11}^{a_1}\otimes{\rm Z}^{a_2}\otimes \sigma_{11}^{\otimes j}\right)\left({\rm U}_{a_2}\left(-\frac{\pi}{2}\right)^{\dagger}\otimes {\rm X}^{j}\otimes{\rm X}^{a_1}\right)\otimes{\rm I}^{\otimes n}\nonumber\\
   &-\frac{1}{\bar{\rho}l}{\rm X}^{a_1}\left(\sigma_{11}^{a_1}\otimes {\rm Y^{a_2}}\right){\rm X^{a_1}}\otimes {\rm I}^{\otimes n} \otimes {\rm I}^{\otimes n},
\end{align}
\end{widetext}
where we used the identities
\begin{align}
\sigma_{01}^j=X^j\sigma_{10}^j X^j,\quad \sigma_{01}^{a_2}-\sigma_{10}^{a_2}=i{\rm Y}^{a_2}\nonumber
\end{align}
and Eq.~\ref{eq:sum_operator}. Here, an upper index $j$ refers to action on the $j$-th qubit in $q_x$ register and ${\rm Y}$ is a Pauli-Y matrix. With a small abuse of notation we also denoted by $U_{a_2}(\lambda)$ a $(j+1)$-qubit operator Eq.~\ref{eq:umatrix} acting on the first $j$ qubits and an ancillary qubit $a_2$.

For the $y$-derivatives term of the Hamiltonian Eq.~\ref{eq:hamiltonian_dp_dm} we obtain a similar result:
\begin{widetext}
\begin{align}\label{eq:_hamiltonian_y_term}
   \tilde{H}^{off}_y=&-\frac{i}{\bar{\rho}}\left(\sigma_{01}^{a_1}\otimes\sigma_{00}^{a_2}\otimes {\rm I}^{\otimes n}\otimes D^++\sigma_{10}^{a_1}\otimes\sigma_{00}^{a_2}\otimes {\rm I}^{\otimes n}\otimes D^-\right)\nonumber\\
   =&-\frac{1}{\bar{\rho}l}{\rm I}^{\otimes n}\otimes\sum_{j=1}^n{\rm I}^{\otimes(n-j)}\otimes\left({\rm X}^{a_2}\otimes {\rm X}^j\otimes {\rm U}_{a_1}\left(-\frac{\pi}{2}\right)\right)\left({\rm Z}^{a_1}\otimes \sigma_{11}^{a_2}\otimes \sigma_{11}^{\otimes j}\right)\left({\rm U}_{a_1}\left(-\frac{\pi}{2}\right)^{\dagger}\otimes {\rm X}^{j}\otimes {\rm X}^{a_2}\right)\nonumber\\
   &-\frac{1}{\bar{\rho}l}{\rm I}^{\otimes n}\otimes {\rm X}^{a_2}\left({\rm Y^{a_1}}\otimes\sigma_{11}^{a_2}\right){\rm X^{a_2}}\otimes {\rm I}^{\otimes n} .
\end{align}
\end{widetext}

\begin{figure*}[ht]
   \centering
   \resizebox{\textwidth}{!}{%
   \begin{quantikz}[row sep=0.6cm]
\lstick{$q_{x,1}$}&\gate[6,label
style={yshift=-0.3cm}]{{\rm \tilde{W}}_{x,j}}&\ghost{X}\\
\wave&&&&&&&\\
\lstick{$q_{x,j-1}$}&&\ghost{X}\\
\lstick{$q_{x,j}$}&&\ghost{X}\\
\lstick{$a_2$}&&\ghost{X}\\
\lstick{$a_1$}&&\ghost{X}
   \end{quantikz}\hspace{-2.cm}=
   \begin{quantikz}[row sep=0.5cm]
\lstick{$q_{x,1}$}&\gate[4,label
style={yshift=-0.3cm}]{{\rm U}\left(-\frac{\pi}{2}\right)^\dagger}&\ctrl{3}& \gate[4,label
style={yshift=-0.3cm}]{{\rm U}\left(-\frac{\pi}{2}\right)} & & \gate[5,label
style={yshift=-0.3cm}]{{\rm U}\left(-\frac{\pi}{2}\right)^\dagger}&\ctrl{3} &\gate[5,label
style={yshift=-0.3cm}]{{\rm U}\left(-\frac{\pi}{2}\right)} & &\ghost{X}\\
\wave&&&&&&&&&&&&\\
\lstick{$q_{x,j-1}$}&&\ctrl{1}& &&&\ctrl{1}&&&\ghost{X}\\
\lstick{$q_{x,j}$}&&\gate{{\rm RZ}\left(-\frac{\bar{u}\tau}{l}\right)}&&\gate{{\rm X}}&&\ctrl{1} &&\gate{{\rm X}}&\ghost{X}\\
\lstick{$a_2$}&&\gate{{\rm RY}\left(-\frac{2\tau}{\bar{\rho}l}\right)}&&&& \gate{{\rm RZ}\left(-\frac{2\tau}{\bar{\rho}l}\right)}& &&\ghost{X}\\
\lstick{$a_1$}&\gate{{\rm X}}&\ctrl{-1}&&&&\ctrl{-1}& \gate{{\rm X}}& &\ghost{X} 
   \end{quantikz}
   }
   \caption{Quantum circuit for the ${\rm \tilde{W}}_{x,j}$ operator acting on the $q_{x,1},\dots,q_{x,j},a_1,a_2$ register.}
   \label{fig:W_x_operator2}
\end{figure*}

Using first-order Lie-Trotter-Suzuki decomposition the time-evolution operator takes the form
\begin{align}\label{eq:evolution_operator2}
&{\rm \tilde{V}}(\tau)=\exp{\left(-i\tilde{H}\tau\right)}\nonumber\\
&\approx\exp{\left(-iH_x^{diag}\tau\right)}\exp{\left(-i\tilde{H}_x^{off}\tau\right)}\exp{\left(-i\tilde{H}_y^{off}\tau\right)}\nonumber\\
&\approx \prod_{j=1}^n {\rm \tilde{W}}_{x,j} \prod_{j=1}^n {\rm \tilde{W}}_{y,j},
\end{align}
where the operators ${\rm \tilde{W}}_{x,j}$ and ${\rm \tilde{W}}_{y,j}$ take the form
\begin{widetext}
\begin{align}\label{eq:evolution_operator22}
&{\rm \tilde{W}}_{x,j}= {\rm I}^{a_1}\otimes {\rm I}^{a_2}\otimes {\rm I}^{\otimes(n-j)}\otimes {\rm U}_j(-\pi/2)\exp{\left(i \frac{\bar{u}\tau}{2l}{\rm Z}\otimes \sigma_{11}^{\otimes (j-1)}\right)}{\rm U}_j(-\pi/2)^{\dagger}\otimes {\rm I}^{\otimes n}\nonumber\\
&\times   {\rm I}^{\otimes(n-j)}\otimes\left({\rm X}^{a_1}\otimes {\rm X}^j\otimes{\rm U}_{a_2}\left(-\frac{\pi}{2}\right)\right)\exp{\left( i\frac{\tau}{\bar{\rho}l}\sigma_{11}^{a_1}\otimes{\rm Z}^{a_2}\otimes \sigma_{11}^{\otimes j}\right)}\left({\rm U}_{a_2}\left(-\frac{\pi}{2}\right)^{\dagger}\otimes {\rm X}^j\otimes{\rm X}^{a_1}\right)\otimes{\rm I}^{\otimes n}\nonumber\\
&\times {\rm X}^{a_1}\exp{\left(i\frac{\tau}{\bar{\rho}l}\sigma_{11}^{a_1}\otimes {\rm Y^{a_2}}\right)}{\rm X^{a_1}}\otimes {\rm I}^{\otimes n} \otimes {\rm I}^{\otimes n},\nonumber\\
&{\rm \tilde{W}}_{y,j}=  {\rm I}^{\otimes n}\otimes{\rm I}^{\otimes(n-j)}\otimes\left({\rm X}^{a_2}\otimes {\rm X}_j\otimes{\rm U}_{a_1}\left(-\frac{\pi}{2}\right)\right)\exp{\left(i\frac{\tau}{\bar{\rho}l}{\rm Z}^{a_1}\otimes \sigma_{11}^{a_2}\otimes \sigma_{11}^{\otimes j}\right)}\left({\rm U}_{a_1}\left(-\frac{\pi}{2}\right)^{\dagger}\otimes {\rm X}^{j}\otimes {\rm X}^{a_2}\right)\nonumber\\
&\times {\rm I}^{\otimes n}\otimes {\rm X}^{a_2}\exp{\left(i\frac{\tau}{\bar{\rho}l}{\rm Y^{a_1}}\otimes\sigma_{11}^{a_2}\right)}{\rm X^{a_2}}\otimes {\rm I}^{\otimes n}.
\end{align}
\end{widetext}
In the above we recognize one of the factors as a controlled ${\rm RY}$ rotation with the control on the $a_1$ qubit and the target on $a_2$:
\begin{align}
    {\rm CRY}_{a_2}^{a_1}\left(-\frac{2\tau}{\bar{\rho}l} \right)=\exp{\left(i\frac{\tau}{\bar{\rho}l}\sigma_{11}^{a_1}\otimes {\rm Y^{a_2}}\right)},\nonumber
\end{align}

As before, we separately implement the $x$ and $y$ evolution terms ${\rm \tilde{W}}_{x,j}$ (see Fig.~\ref{fig:W_x_operator2}) and ${\rm \tilde{W}}_{y,j}$ (see Fig.~\ref{fig:W_y_operator2}). Note though the appearance of a factor of $2$ in the rotation angles. Comparing obtained circuits to the ones with only $D^\pm$ operator in the main text we see that no ${\rm MCRZZ}$ gates appear, which results in a more economical decomposition in terms of the elementary ${\rm CNOT}$ numbers. However, usage of three types of derivatives at once can disturb the symmetry of the solution.

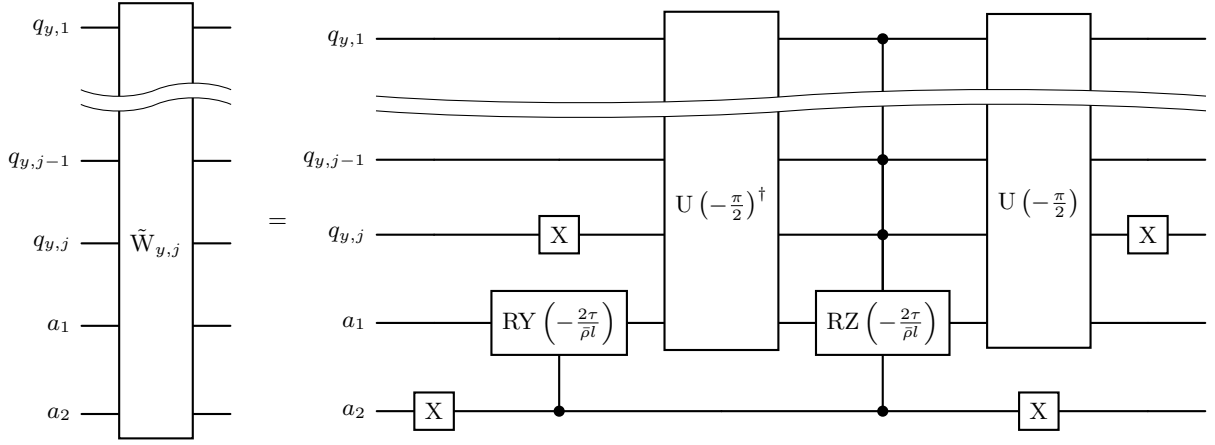
\begin{figure*}[ht]
    \centering
    \begin{quantikz}[row sep=0.6cm]
\lstick{$q_{y,1}$}&\gate[6,label
style={yshift=-0.3cm}]{{\rm \tilde{W}}_{y,j}}&\ghost{X}\\
\wave&&&&&&&\\
\lstick{$q_{y,j-1}$}&&\ghost{X}\\
\lstick{$q_{y,j}$}&&\ghost{X}\\
\lstick{$a_1$}&&\ghost{X}\\
\lstick{$a_2$}&&\ghost{X}
    \end{quantikz}\hspace{-2.cm}=
    \begin{quantikz}[row sep=0.5cm]
\lstick{$q_{y,1}$}&&&\gate[5,label
style={yshift=-0.3cm}]{{\rm U}\left(-\frac{\pi}{2}\right)^\dagger}&\ctrl{3} &\gate[5,label
style={yshift=-0.3cm}]{{\rm U}\left(-\frac{\pi}{2}\right)}&&\ghost{X}\\
\wave&&&&&&&&\\
\lstick{$q_{y,j-1}$}&&&& \ctrl{1}&&&\ghost{X}\\
\lstick{$q_{y,j}$}&&\gate{{\rm X}}&& \ctrl{1}&&\gate{{\rm X}}&\ghost{X}\\
\lstick{$a_1$}&&\gate{{\rm RY}\left(-\frac{2\tau}{\bar{\rho}l}\right)}& &\gate[1]{{\rm RZ}\left(-\frac{2\tau}{\bar{\rho}l}\right)} &&&\ghost{X}\\
\lstick{$a_2$}&\gate{{\rm X}}&\ctrl{-1}&& \ctrl{-2}&\gate{{\rm X}}&&\ghost{X}
    \end{quantikz}
    \caption{Quantum circuit for the ${\rm \tilde{W}}_{y,j}$ operator acting on the $q_{y,1},\dots,q_{y,j},a_1,a_2$ register.}
    \label{fig:W_y_operator2}
\end{figure*}

\section{Circuit derivation for the $D^{\pm}$ operator with periodic boundary conditions\label{appPeriodiccond}}
\setcounter{equation}{0}
\renewcommand{\theequation}{C\arabic{equation}}

The central finite difference operator with periodic boundary conditions can be encoded on $n$ qubits as follows :
\begin{align}\label{eq:diffopPBC}
D^{\pm}_P=\frac{1}{2l}\left(S^--S^+-\sigma_{01}^{\otimes n}+\sigma_{10}^{\otimes n}\right).
\end{align}
In the first order Trotter approximation the unitary evolution it generates is given by a product of two factors:
\begin{align}\label{eq:pbc1}
    &e^{D^\pm_{P} \gamma} = e^{-i (iD^\pm_P) \gamma}\approx \exp{\left[-i\left(\frac{i\gamma}{2l}(S^--S^+)\right)\right]}\nonumber\\
    &\times\exp{\left[-i\left(\frac{i\gamma}{2l}(-\sigma_{01}^{\otimes n}+\sigma_{10}^{\otimes n})\right)\right]},
\end{align}
where the first factor is given in the Eq.~\ref{eq:bc1} and the second one can be shown in a similar way to be:
\begin{align}\label{eq:pbc2}
    &\exp{\left[-i\left(\frac{i\gamma}{2l}(-\sigma_{01}^{\otimes n}+\sigma_{10}^{\otimes n})\right)\right]}\nonumber\\
    &=\left({\rm I}\otimes {\rm X}^{\otimes (n-1)}\right){\rm U}_n\left(-\frac{\pi}{2}\right){\rm MCRZ}_n^{1,\ldots n-1}\left(-\frac{\gamma}{l}\right)\nonumber\\
    &\times{\rm U}_n\left(-\frac{\pi}{2}\right)^{\dagger}\left({\rm I}\otimes {\rm X}^{\otimes (n-1)}\right).
\end{align}
Here we used the following identity, which can be easily verified (see Eq.~\ref{eq:sum_operator})
\begin{align}
&i(-\sigma_{01}^{\otimes n} +\sigma_{10}^{\otimes n})=\left({\rm I}\otimes {\rm X}^{\otimes (n-1)}\right){\rm U}_n\left(-\frac{\pi}{2}\right)\nonumber\\
&\times\left(-Z\otimes \sigma_{11}^{\otimes (n-1)}\right){\rm U}_n\left(-\frac{\pi}{2}\right)^{\dagger}\left({\rm I}\otimes {\rm X}^{\otimes (n-1)}\right).\nonumber\\
\end{align}

To write down the Euler equations with periodic boundary conditions we only need to add to the Dirichlet boundary conditions  case Eq.~\ref{eq:hamiltonian} an additional term representing ``corner" entries in the difference operator:
\begin{align}\label{eq:hamiltonianPBC}
   &H_P= -\frac{i}{2l} \left(\bar{u} {\rm I}^{a_1}\otimes {\rm I}^{a_2} + \frac{1}{\bar{\rho}}\sigma_{00}^{a_1}\otimes{\rm X}^{a_2} \right)\nonumber\\
   &\otimes (-\sigma_{01}^{\otimes n}+\sigma_{10}^{\otimes n})\otimes {\rm I}^{\otimes n}\nonumber\\
   &-\frac{i}{2l\bar{\rho}}{\rm X}^{a_1}\otimes\sigma_{00}^{a_2}\otimes {\rm I}^{\otimes n} \otimes (-\sigma_{01}^{\otimes n}+\sigma_{10}^{\otimes n}).
\end{align}
Making similar calculations as in the Appendix~\ref{appegen}, in the first order Trotter approximation we obtain the following evolution operator:
\begin{align}\label{evolutionopPBC}
{\rm V}(\tau)=e^{-iH\tau}\approx Q_y{\rm W}_{y,P} Q_x{\rm W}_{x,P},
\end{align}
where $Q_x$ and $Q_y$ are defined by Eqs.~\ref{eq:evolution_op2} and \ref{eq:evolution_operator4}, and the ${\rm W}_{x,P}$, ${\rm W}_{y,P}$ operators are given by
\begin{widetext}
\begin{align}
&{\rm W}_{x,P}= {\rm I}^{a_1}\otimes {\rm I}^{a_2}\otimes  \left({\rm I}\otimes {\rm X}^{\otimes (n-1)}\right) {\rm U}_n\left(-\frac{\pi}{2}\right) {\rm MCRZ}_n^{1,\ldots, n-1}\left(\frac{\bar{u}\tau}{l}\right) 
{\rm U}_n\left(-\frac{\pi}{2}\right)^{\dagger} \left({\rm I}\otimes {\rm X}^{\otimes (n-1)}\right)
\nonumber\\
&\times  
\left({\rm I\otimes {\rm X}^{\otimes (n-1)}}\right) 
\left({\rm X}^{a_1}\otimes {\rm H}^{a_2}
\otimes{\rm U}_{n}\left(-\frac{\pi}{2}\right)
\right)
{\rm MCRZZ}_{n,a_2}^{1,\ldots, n-1,a_1}\left(\frac{\tau}{\bar{\rho}l}\right) \left({\rm X}^{a_1}\otimes {\rm H}^{a_2}\otimes{\rm U}_{n}\left(-\frac{\pi}{2}\right)^{\dagger}\right)\left({\rm I\otimes {\rm X}^{\otimes (n-1)}}\right)
\otimes {\rm I}^{\otimes n},\nonumber\\
&{\rm W}_{y,P}=  {\rm I}^{\otimes n}\otimes\left({\rm I}\otimes {\rm X}^{\otimes (n-1)}\right)\left({\rm H}^{a_1}\otimes {\rm X}^{a_2}\otimes {\rm U}_{n}\left(-\frac{\pi}{2}\right)\right) {\rm MCRZZ}_{n,a_1}^{1,\ldots, n-1,a_2}\left(\frac{\tau}{\bar{\rho}l}\right) \left({\rm H}^{a_1}\otimes {\rm X}^{a_2}\otimes{\rm U}_{n}\left(-\frac{\pi}{2}\right)^{\dagger}\right)\left({\rm I}\otimes {\rm X}^{\otimes(n-1)}\right).\nonumber
\end{align}
\end{widetext}
Note the changed sign in the multi-controlled rotations compared to circuit in the main text.

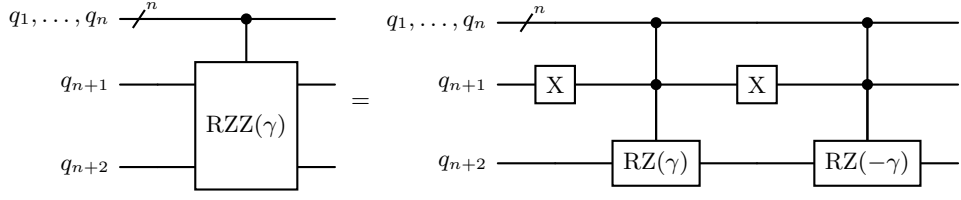
\begin{figure*}[!t]
    \centering
    \begin{quantikz}
\lstick{$q_1,\dots,q_n$}&\qwbundle{n}&\ctrl{1}&\\
\lstick{$q_{n+1}$}&&\gate[2]{{\rm RZZ}(\gamma)}&\\
\lstick{$q_{n+2}$}&&&
\end{quantikz}
=
\begin{quantikz}
\lstick{$q_1,\dots,q_n$}&\qwbundle{n}&\ctrl{2}&&\ctrl{2}&\\
\lstick{$q_{n+1}$}&\gate{{\rm X}}&\ctrl{1}&\gate{{\rm X}}&\ctrl{1}&\\
\lstick{$q_{n+2}$}&&\gate{{\rm RZ}(\gamma)}&&\gate{{\rm RZ}(-\gamma)}&
\end{quantikz}
    \caption{A decomposition of ${\rm MCRZZ}^{1,\dots,n}_{n+1,n+2}$ gate into two ${\rm MCRZ}^{1,\dots,n+1}_{n+2}$ gates.}
    \label{fig:mcrzz_decompozition}
\end{figure*}

\section{Proof of Lemma~\ref{lemma1} \label{appCommutators}}
\setcounter{equation}{0}
\renewcommand{\theequation}{D\arabic{equation}}

Consider the Hamiltonian Eq.~\ref{eq:hamiltonian} in the form given in Eqs.~\ref{eq:hamiltonian3}-\ref{eq:hamiltonian6}. From \cite[Proposition 9]{Childs2021} and the triangle inequality the additive Trotter error between $\exp(-iH\tau)$ and $V(\tau)$ can be bounded by
 \begin{align}\label{eq:upper_error_bound}
     &\sum_{j=1}^n \sum_{j'=j+1}^n\left\lVert\left[H^{diag}_{x,j},H^{diag}_{x,j'}\right]\right\rVert
     +\sum_{j=1}^n \sum_{j'=j+1}^n\left\lVert\left[H^{off}_{x,j},H^{off}_{x,j'}\right]\right\rVert\nonumber\\
     &+\sum_{j=1}^n \sum_{j'=j+1}^n\left\lVert\left[H^{off}_{y,j},H^{off}_{y,j'}\right]\right\rVert
     + \sum_{j=1}^n \sum_{j'=1}^n\left\lVert\left[H^{diag}_{x,j},H^{off}_{x,j'}\right]\right\rVert\nonumber\\
     &+\sum_{j=1}^n \sum_{j'=1}^n\left\lVert\left[H^{diag}_{x,j},H^{off}_{y,j'}\right]\right\rVert
     + \sum_{j=1}^n \sum_{j'=1}^n\left\lVert\left[H^{off}_{x,j},H^{off}_{y,j'}\right]\right\rVert,
 \end{align}
where the terms with subindex $j$ or $j'$ correspond to elements under the sum in Eq.~\ref{eq:hamiltonian4}-\ref{eq:hamiltonian6}.

In Refs.~\cite{toyota2024pde,Hu_2024} it was shown that two terms Eq.~\ref{eq:sum_operator} with indices $j'>j$ do not commute with each other only when $j=1$; in this case the norm of their commutator is $1$.
From this we can instantly give estimates for first three sums in Eq.~\ref{eq:upper_error_bound}:
\begin{align}\label{eq:upper_error123}
     &\sum_{j=1}^n \sum_{j'=j+1}^n\left\lVert\left[H^{diag}_{x,j},H^{diag}_{x,j'}\right]\right\rVert = \left(\frac{\bar{u}}{2l}\right)^2(n-1),\nonumber\\
     &\sum_{j=1}^n \sum_{j'=j+1}^n\left\lVert \left[H^{off}_{x,j},H^{off}_{x,j'}\right]\right\rVert = \left(\frac{1}{2l\bar{\rho}}\right)^2(n-1),\nonumber\\
     &\sum_{j=1}^n \sum_{j'=j+1}^n \left\lVert\left[H^{off}_{y,j},H^{off}_{y,j'}\right]\right\rVert = \left(\frac{1}{2l\bar{\rho}}\right)^2(n-1).
\end{align}
Let us calculate the fourth term in Eq.~\ref{eq:upper_error_bound}. Using the explicit form of the Hamiltonian $H^{diag}_{x,j}$ and $H^{off}_{x,j'}$ the commutator between them takes the form
 \begin{widetext}
 \begin{align}\label{eq:upper_error4}
 &\left[H^{diag}_{x,j},H^{off}_{x,j'}\right] = -\frac{\bar{u}}{\bar{\rho}}\left(\frac{1}{2l}\right)^2\sigma_{00}^{a_1}\otimes X^{a_2}\nonumber\\
 &\otimes\left[{\rm I}^{\otimes (n-j)}\otimes (\sigma_{01}\otimes\sigma_{10}^{\otimes(j-1)}-\sigma_{10}\otimes\sigma_{01}^{\otimes(j-1)}),{\rm I}^{\otimes (n-j')}\otimes (\sigma_{01}\otimes\sigma_{10}^{\otimes(j'-1)}-\sigma_{10}\otimes\sigma_{01}^{\otimes(j'-1)})\right]\otimes {\rm I}^{\otimes n}.
 \end{align}
 \end{widetext}
 Upon a closer inspection we recognize that this commutator has precisely the same structure as the ones considered before, therefore its norm can be evaluated to be
 \begin{align}\label{eq:upper_error5}
     \sum_{j=1}^n \sum_{j'=1}^n\left\lVert\left[H^{diag}_{x,j},H^{off}_{x,j'}\right]\right\rVert = \frac{\bar{u}}{\bar{\rho}}\frac{2}{\left(2l\right)^2}(n-1).
 \end{align}
 
It is easy to verify that the commutator between the $H^{diag}_{x,j}$ and $H^{off}_{y,j}$ operators is equal to zero. This is because they define the evolution on disjoint sets of qubits.
 Therefore, we obtain
 \begin{align}\label{eq:upper_error6}
     \sum_{j=1}^n \sum_{j'=1}^n\left\lVert\left[H^{diag}_{x,j},H^{off}_{y,j'}\right]\right\rVert = 0.
 \end{align}

\begin{figure*}[!t]
    \centering
    \includegraphics[width=\textwidth]{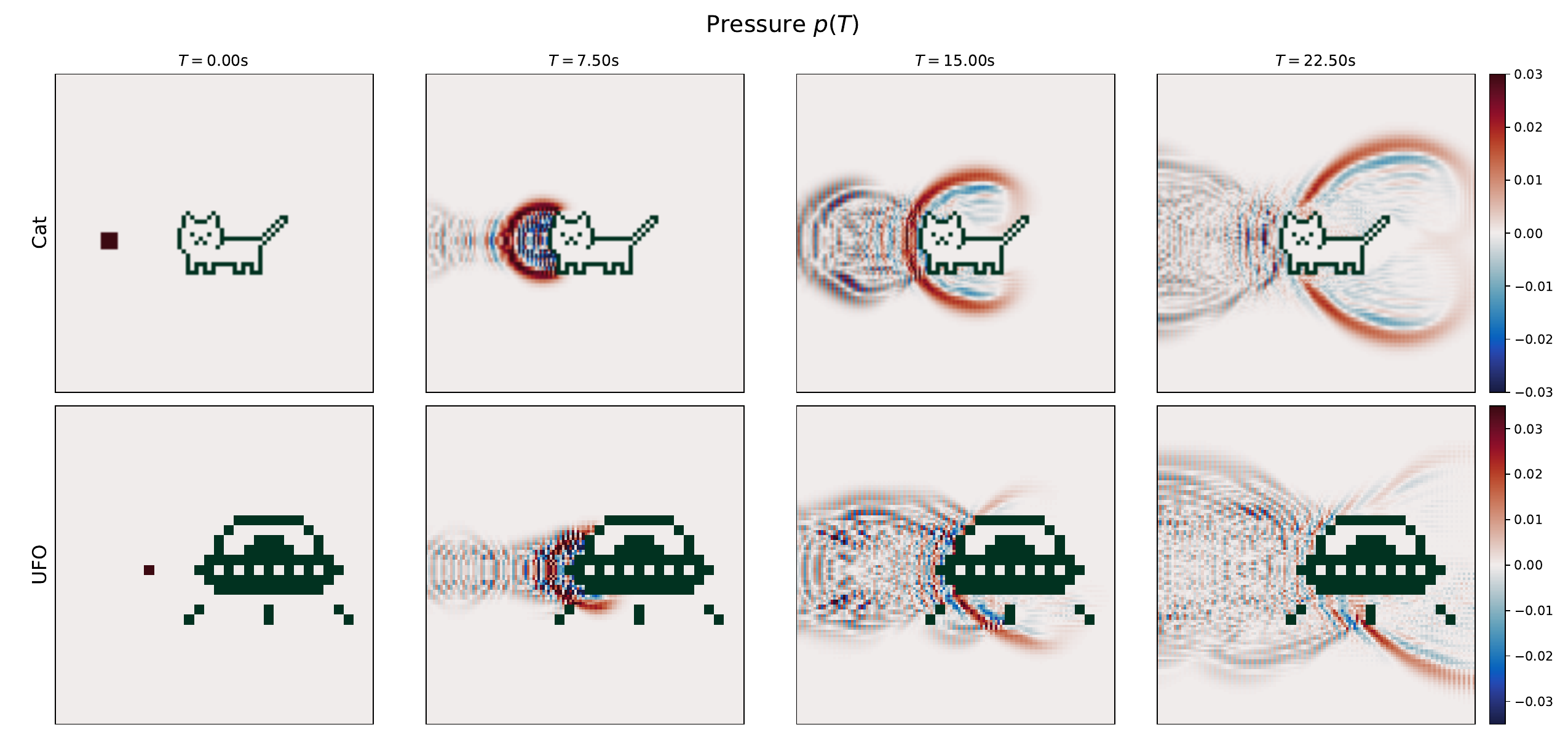}
    \caption{Quantum simulation of an acoustic wave propagating on top of a constant background flow and scattering from a complex shaped obstacle, obtained by executing quantum circuits derived using our construction. Here we plot the pressure (difference) at four distinct time points for the cat- \textbf{(top row)} and UFO-shaped obstacles \textbf{(bottom row)}. We used $n=8$ qubits, mean flow velocity $\bar{u}=2$ and a time step of $\tau=0.05$.}
    \label{fig:cat_ufo}
\end{figure*}
 
Finally, let us compute the commutator between the $H^{off}_{x,j}$ and $H^{off}_{y,j}$ operators:
  \begin{widetext}
 \begin{align}\label{eq:upper_error7}
 &\left[H^{off}_{x,j},H^{off}_{y,j'}\right] = -\frac{1}{(2l\bar{\rho})^2}\left[\sigma_{00}^{a_1}\otimes X^{a_2},X^{a_1}\otimes\sigma_{00}^{a_2}\right]\nonumber\\
 &\otimes{\rm I}^{\otimes (n-j)}\otimes (\sigma_{01}\otimes\sigma_{10}^{\otimes(j-1)}-\sigma_{10}\otimes\sigma_{01}^{\otimes(j-1)})\otimes{\rm I}^{\otimes (n-j')}\otimes (\sigma_{01}\otimes\sigma_{10}^{\otimes(j'-1)}-\sigma_{10}\otimes\sigma_{01}^{\otimes(j'-1)})\nonumber\\
 &=-\frac{1}{(2l\bar{\rho})^2}\left(\sigma_{01}^{a_1}\otimes \sigma_{10}^{a_2}-\sigma_{10}^{a_1}\otimes\sigma_{01}^{a_2}\right)\nonumber\\
 &\otimes{\rm I}^{\otimes (n-j)}\otimes (\sigma_{01}\otimes\sigma_{10}^{\otimes(j-1)}-\sigma_{10}\otimes\sigma_{01}^{\otimes(j-1)})\otimes{\rm I}^{\otimes (n-j')}\otimes (\sigma_{01}\otimes\sigma_{10}^{\otimes(j'-1)}-\sigma_{10}\otimes\sigma_{01}^{\otimes(j'-1)}).
 \end{align}
 \end{widetext}
 The spectral norm of this commutator is equal to $1/(2l\bar{\rho})^2$, therefore we have:
\begin{align}\label{eq:upper_error8}
     &\sum_{j=1}^n \sum_{j'=1}^n \left\lVert\left[H^{off}_{x,j},H^{off}_{y,j'}\right]\right\rVert \leq  \frac{1}{(2l\bar{\rho})^2}\sum_{j=1} \sum_{j'= 1} 1\nonumber\\
     &=\frac{1}{(2l\bar{\rho})^2}n^2.
\end{align}

Summing everything up we obtain Eq.~\ref{eq:upper_error}, which finishes the proof.
\qed

\section{Improved obstacle circuit implementation on a QPU}\label{appObstacle}
\setcounter{equation}{0}
\renewcommand{\theequation}{E\arabic{equation}}

In Sec.~\ref{object_ins_env} we described the general algorithm for implementing arbitrary-shaped obstacles in the flow.
For executing simulation on near- to mid-term QPUs, however, even constant factor improvements can make a crucial difference. Here we describe several such optimizations, which reduce the circuit depth and other sources of error (\emph{e.g.}~the Trotter error).

First, we show that obstacles can be implemented without increasing the overall Trotter error, which is not immediately apparent from the form of the generator of evolution with obstacles in Eq.~\ref{eq:dpmsubs}. To this end
recall that a rectangular binary cell obstacle with Dirichlet boundary conditions was implemented in Sec.~\ref{object_ins_env} by canceling the action of the finite difference operator on its boundaries. The difference operator $D^\pm$ was given in Eq.~\ref{eq:bc1} as a product of factors which we here denote by $W_j(\gamma)$:
\begin{equation}\label{eq:W_j_general}
    W_j(\gamma) = {\rm U}_j(-\pi/2) {\rm MCRZ}^{1,\dots, {j-1}}_{j}\left(\frac{\gamma}{l}\right) {\rm U}_j(-\pi/2)^\dagger.
\end{equation}

As mentioned in Sec.~\ref{object_ins_env} each $W_j(\gamma)$ implements $2^{n-j}$ disjoint sets of $\pm1$ elements on the super- and sub-diagonals in the explicit matrix representation of $D^\pm$. Thus, cancellation of these $\pm1$ pairs by the obstacle terms only affects the corresponding $W_j$ operator. Furthermore, we have seen that the obstacle cancellation factor $e^{\sigma_{\hat{n}}\gamma}$ in Eq.~\ref{eq:sigma_exp} is also equal to $W_{j}$ with $j=\hat{n}$, up to the controls on both registers which depend on the location of the obstacle, \emph{i.e.}~that $\sigma_{\hat{n}}$ is the generator of $W_{\hat{n}}$.

Note now that if arbitrary generators $A$ and $B$ commute, then so do their controlled versions ${\rm I}\otimes A$, $|1\rangle\langle1|\otimes B$ and $|0\rangle\langle0|\otimes B$. This is also true of their multi-controlled versions, as easily shown by the mathematical induction. Taking advantage of the above observation, we can group all the generators of the $D^\pm$ evolution factors $W_j$ in Eq.~\ref{eq:bc1line} with the obstacle terms sharing the same value of $\hat{n}=j$ (a general obstacle can be composed of a multitude of binary cells of different sizes). Since for each $j$ we now have the generator of $W_j$ and (possibly) its multi-controlled version from the obstacle, which commute, the Trotter implementation of their sum is \emph{exact}. The only Trotter error comes from non-commuting terms for different $j$ (see App.~\ref{appCommutators}) and does not increase, since putting some of the matrix elements to zero does not increase its spectral norm. This is to be contrasted with direct Trotterization of the evolution generated by Eq.~\ref{eq:dpmsubs}, where there would be additional error from the un-rearranged obstacle terms.

The above technical manipulation has a very clear physical interpretation: the exact implementation of the obstacle in the Trotter procedure ensures that the boundaries of the obstacle remain \textit{impenetrable}, which would not hold in the naive Trotterization of Eq.~\ref{eq:dpmsubs}.

In addition to the Trotter error improvement, a number of circuit simplifications are possible. First, note that in the construction of the $D^\pm$ operator with obstacles there is no need to implement the hermitian conjugate of $e^{\sigma_{\hat{n}}\gamma}$ as an inverse circuit (see Fig.~\ref{fig:D_operator_obstacle}). Instead it suffices to change the sign in the ${\rm RZ}$ rotation in Eq.~\ref{eq:W_j_general} or Eq.~\ref{eq:sigma_exp}. 

Next, observe that the implementation of the obstacle circuit in Fig.~\ref{fig:D_operator_obstacle} contains multi-controlled versions of the multi-target operator $e^{\sigma_{\hat{n}}\gamma} \equiv W_{\hat{n}}$. The naive implementation of it would put the multi-controls on all of the factors in the definition in Eq.~\ref{eq:W_j_general}, resulting in a very complex circuit (after transpilation to a local gate basis). Instead we note that the effect of putting the controls on the $U_j(-\pi/2)$ operator and its inverse is redundant, and exactly the same the same action is obtained by only putting the controls on the ${\rm MCRZ}$ gate. Since this is in itself a multi-controlled gate, we obtain an MCRZ gate with a larger control set. The total amount of controls may not exceed the register size and remains $\mathcal{O}(n)$, and therefore the total circuit complexity remains quadratic (see the proof of Lemma~\ref{lemma2}).

Recalling that in the re-arrangement minimizing the Trotter error all terms of the same size $j$ are grouped together, we observe that the ``inner" basis change operators $U_j(\pi/2)$ coming from the original generator of $D^\pm$ and from the obstacle cancel exactly (since the latter can also be made un-controlled, as discussed above). There is thus a single pair of $U_j(\pi/2)$, $U_j(\pi/2)^\dagger$ per each $j$ group left, further simplyfing the circuit.

Finally, we note that for a more complex obstacle shape which decomposes into multiple binary cells of different sizes, potentially adjacent to each other, there is no need to include cancellation terms for the boundaries they share.

\section{Additional simulations with complex obstacles \label{appExperiments}}

To illustrate the applicability of our circuit construction to general obstacle types we supplement the experiments in Section~\ref{experiment} with additional demonstration, with more complex -- if not entirely realistic -- shapes, the outlines of which form a cat and a UFO. The results of executing the circuits for the acoustic wave propagation in a background flow obtained using our algorithm are shown in  Fig.~\ref{fig:cat_ufo}. Here we used $c=\bar{\rho}=1$, $\bar{u}=2$, $l=0.5$, $\tau=0.05$ and employed $n=8$ qubits.

\begin{figure}
    \centering
    \includegraphics[width=\textwidth]{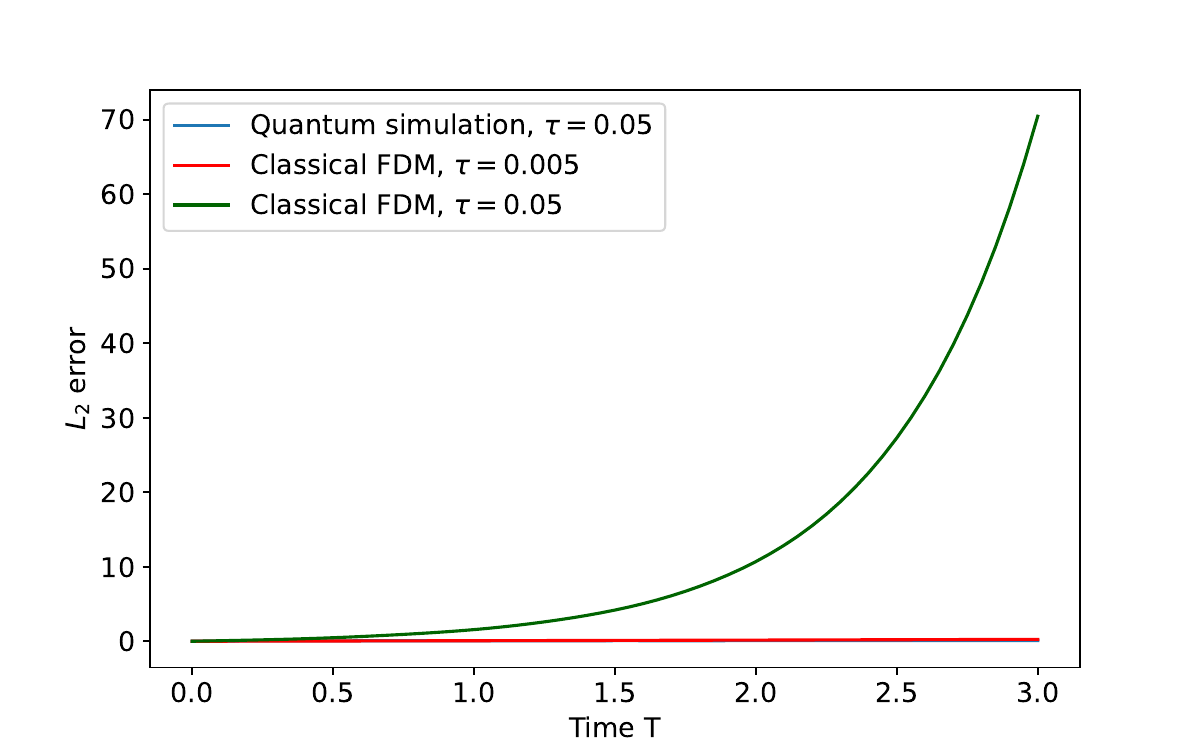}
    \caption{The $L_2$-error of the quantum and classical FDM solutions for $p(T)$ in Fig.\ref{fig:euler_ideal_comparison} with respect to the exact matrix exponential solution. When using the same time increment of $\tau=0.05$ as the quantum solution the classical method becomes numerically unstable. To ensure a comparable (but still larger) error the classical FDM required $\tau=0.005$ (see Fig.\ref{fig:l2_error})}
    \label{fig:l2_error_unstable}
\end{figure}

\section{Extending quantum simulations to non-conservative regimes\label{appNonlinear}}
\setcounter{equation}{0}
\renewcommand{\theequation}{G\arabic{equation}}

Here we give a brief overview of how simulation of PDEs, in particular of LEE Eq.~\ref{eq:pde_euler_full}, can be extended to the case when then underlying numerical simulation can not be written as the Schr\"odinger equation directly. For the LEE this happens when $c\neq 1/\bar{\rho}$ or different boundary conditions, \emph{e.g.}~von Neumann, are used.

Let $A$ be the matrix, which represents the numerical scheme of the simulation on a grid with already substituted differential operators and boundary conditions. We can split it in its real and imaginary parts by:
\begin{equation}
    A_1 = \frac{A + A^\dagger}{2}, \quad\quad\quad A_2 = \frac{A - A^\dagger}{2i},
\end{equation}
where both matrices are Hermitian ($A_{1,2}^\dagger = A_{1,2}$). By the Trotter formula we have:
\begin{equation}
    e^{A\tau} = e^{A_1\tau + iA_2\tau} \approx e^{A_1\tau}  e^{iA_2\tau}.
\end{equation}
Thus simulation of the PDE on a quantum computer can be reduced to implementing alternating steps of imaginary and real time evolutions. While efficient implementation of such quantum circuits is an active area of research, multiple methods already exist.
In particular,  Refs.~\cite{Joseph_2020,Novikau_2025} extend the PDE with additional phase space via Koopman-von Neumann approach, so that solution can be stored and implemented on a quantum computer. A similar idea of extending the solution space with additional dimension to implement $e^{A_1\tau}$ is the basis of the ``Schr\"odingerization" method \cite{jinliuma2023,PhysRevLett.133.230602,jin2025schrodingerizationbasedquantumalgorithms,jin2023quantumsimulationquantumdynamics,jin2024quantumsimulationfokkerplanckequation}. In \cite{jin2022quantumsimulationpartialdifferential,Hu_2024} this technique is applied to non-conservative heat and advection equations. More general imaginary time evolutions methods can also be used \cite{leamer2024quantumdynamicalemulationimaginary,Motta_2019,McArdle_2019}, in particular methods based on linear combination of unitaries (LCU) are the most common \cite{leadbeater2023nonunitarytrottercircuitsimaginary,An_2023}. In \cite{guseynov2024explicitgateconstructionblockencoding} Schr\"odingerization and LCU methods are combined and generalized, using block-encoding for Hamiltonians. 

Importantly, in all of these techniques the Hilbert space is extended, and unitary operators such as $e^{iA_1\tau}$ (notice the appearance of the imaginary number $i$) are used as the building blocks in the implementation of the required non-unitary $e^{A_1\tau}$. Typically the unitaries become target operators controlled by variables from the extended space using quantum matrix arithmetics \cite{Gily_n_2019}. These controlled versions are still implementable with polynomial resources, therefore the quantum advantage in simulating the exponentially large grid outlined in Section~\ref{comp_analais} remains. Thus, our results on efficient implementation of the unitary operators are also directly relevant to the more general non-unitary case.

For LEE the matrix $A_2$ is structurally the same as in Eq.~\ref{eq:euler_evolution} but with off-diagonal coefficients equal to $(1/\bar{\rho} + \bar{\rho}c^2) / 2i$. $A_1$ is also structurally similar, except with $(1/\bar{\rho} - \bar{\rho}c^2) / 2$ off-diagonal coefficients and nullified diagonal part. Consequently, $e^{iA_1\tau}$  can be implemented following the prescription described in the main text with removing the gates corresponding to the now missing diagonal part. Note that in the conservative regime ($c=1/\bar{\rho}$) we have $A_1=0$ and $e^{iA_2\tau}=e^{-iH\tau}$ for $H$ defined in Eq.~\ref{eq:hamiltonian}.

\end{document}